\documentclass[preprint,12pt]{elsarticle}

\usepackage{amsmath,amssymb,amsfonts}
\usepackage{amsthm}
\usepackage{graphicx}
\usepackage{booktabs}
\usepackage{url}
\usepackage[colorlinks=true,allcolors=blue]{hyperref}
\usepackage{bm}

\newtheorem{theorem}{Theorem}

\newtheorem{proposition}{Proposition}

\newtheorem{definition}{Definition}
\newtheorem{remark}{Remark}

\newcommand{\Z}{\mathbb{Z}}
\newcommand{\C}{\mathbb{C}}

\newcommand{\Cay}{\mathrm{Cay}}

\newcommand{\dG}{d_{G}}
\newcommand{\dGam}{d_{\Gamma}}

\DeclareMathOperator{\idim}{idim}

\journal{arXiv Preprint}

\begin{document}
	
	\begin{frontmatter}
		
		\title{Harmonic Analysis on Graphs via Isometric Group Embedding: A Canonical Fourier Transform, Shift, and Convolution for Network Signals}
		
		\author[inst1]{Rigobert~Fokam~Souop\corref{cor1}}
		\ead{fokamrigobert@gmail.com}
		\cortext[cor1]{Corresponding author}
		
		\author[inst1]{Laurent~Bitjoka}
		
		\affiliation[inst1]{organization={Laboratory of Energy, Signal, Imaging and Automation (LESIA), Department of Electrical Engineering, Energetics and Automation},
			addressline={University of Ngaound\'er\'e}, 
			city={Ngaound\'er\'e},
			country={Cameroon}}
		
		\begin{abstract}
			Graph signal processing built on the eigenvectors of a Laplacian or adjacency shift inherits three structural compromises: the eigenbasis is fixed only up to rotation within degenerate eigenspaces, the shift is not an isometry, and there is no genuine translation under which filtering is a true convolution. We develop an alternative harmonic analysis that removes all three at once. Given an isometric embedding of a connected graph into a Cayley graph of a finite abelian group---a host on which classical Fourier analysis applies exactly---we define a group-embedding graph Fourier transform from the host characters, lift graph signals to the host, and process them there. The characters supply a canonical orthonormal Fourier basis; the group translations form a family of unitary permutation operators obeying an exact group law; and filtering is genuine group convolution, for which the convolution theorem holds as a theorem rather than a definition and which possesses an identity element. We prove the Plancherel, convolution, translation-covariance, and sampling identities in the embedded setting, and compare the shift and convolution operators of the two frameworks side by side. Numerically, the structural identities hold to machine precision; under a same-filter protocol the group-character basis denoises equivalently to the Laplacian eigenbasis once the host complement is filled by a smoothness-respecting extension. The contribution is exact, canonical structure, not a denoising advantage; the guarantees hold for any embedded graph, and host size---small for near-Cayley graphs, exponential for generic ones---determines cost.
		\end{abstract}
		
		\begin{keyword}
			Graph signal processing \sep graph Fourier transform \sep Cayley graphs \sep isometric embedding \sep abelian groups \sep shift operator \sep convolution \sep harmonic analysis \sep group representation
		\end{keyword}
		
	\end{frontmatter}
	
	\section{Introduction}
	Signal processing on graphs extends classical tools---the Fourier transform, filtering, sampling, multiresolution analysis---to data indexed by the vertices of a network \cite{Shuman2013,SandryhailaMoura2013,Ortega2018}. The dominant construction defines a graph Fourier transform (GFT) as the projection of a signal onto the eigenvectors of a \emph{shift} operator, taken to be the adjacency matrix \cite{SandryhailaMoura2013,SandryhailaMoura2014} or a Laplacian \cite{Shuman2013}. This spectral viewpoint is powerful and general: it applies to every graph, and it has produced a rich theory of filtering and multiscale analysis~\cite{Hammond2011,Ortega2018,PerraudinVandergheynst2017}, a family of graph wavelet constructions from diffusion and spatial designs~\cite{CoifmanMaggioni2006,GavishNadlerCoifman2010,NarangOrtega2012}, and tools for community mining and vertex-frequency analysis~\cite{TremblayBorgnat2014,ShumanRicaudVandergheynst2016}.
	
	That generality is bought at a structural price, visible as soon as one asks for the properties that make classical Fourier analysis canonical.
	
	\emph{(i) The basis is not canonical.} The eigenvectors of a generic graph operator are determined only up to the choice of orthonormal basis within each eigenspace. An eigenvalue of multiplicity $m$ fixes an $m$-dimensional subspace, not a basis; any rotation within it is an equally valid ``Fourier basis,'' and the resulting GFT, its phases, and the spectra it reports are correspondingly ambiguous \cite{SandryhailaMoura2014}; the same eigenbasis ambiguity is a known obstacle when transferring functions across domains \cite{OvsjanikovEtAl2012}.
	
	\emph{(ii) The shift is not an isometry.} The adjacency or Laplacian shift does not preserve signal energy: $\lVert As\rVert \neq \lVert s\rVert$ in general. A substantial line of work has sought to repair this by designing alternative, energy-preserving shift operators \cite{GiraultGoncalvesFleury2015,GaviliZhang2017,GrelierPasdeloupVialatteGripon2016}, underscoring that the natural graph shift lacks a property the time shift has for free.
	
	\emph{(iii) There is no genuine translation, hence no genuine convolution.} In classical signal processing, filtering is convolution, and convolution is a weighted superposition of \emph{translates} of a signal. On a generic graph there is no vertex-transitive action to serve as translation; ``convolution'' is therefore \emph{defined} spectrally---as pointwise multiplication in the eigenbasis---and the convolution theorem holds by that definition rather than as a consequence of a translation structure \cite{SandryhailaMoura2013}. The algebraic signal processing theory \cite{PuschelMoura2008a,PuschelMoura2008b} makes precise how much of this structure survives in the abstract: filtering is multiplication in a polynomial algebra, but the shift need be neither unitary nor a group element.
	
	This paper develops a harmonic analysis on graphs that recovers all three properties at once, by changing the substrate on which the transform is built. Suppose a connected graph $G$ is embedded \emph{isometrically} into a Cayley graph of a finite abelian group $\Gamma$; such an embedding always exists, and can be made compact, by the construction recalled in Section~\ref{sec:prelim} (with full proofs in the companion dissertation \cite{FokamThesis2026}). On the host $\Gamma$ the characters furnish a canonical orthonormal Fourier basis; translation is the group action, a unitary operator obeying $T_g T_h = T_{g+h}$; and convolution is the genuine group convolution, for which the convolution theorem is a theorem~\cite{Rudin1962,StankovicAstolaEgiazarian2005,Tolimieri1989,CeccheriniSilberstein2014}. We build the transform there and pull it back to $G$ by the isometric embedding. We call the resulting transform the \emph{group-embedding graph Fourier transform} (GE-GFT) and the associated processing pipeline GE-GSP.
	
	\noindent\textbf{Contributions.}
	\begin{itemize}
		\item We recall the embedding theorems of the companion papers \cite{FokamP1,FokamP2} (developed in full in \cite{FokamThesis2026}) that the analysis rests on (existence/compactness, and the exact dimension of stars), and fix notation for the lift, restriction, and excursion ratio (Section~\ref{sec:prelim}).
		\item We define the GE-GFT from the host characters and prove the Plancherel identity on the lifted subspace (Section~\ref{sec:gft}).
		\item We compare the \emph{shift} and \emph{convolution} operators of GE-GSP and Laplacian-GSP side by side, prove the unitarity and group law of the GE-GSP shift, and identify \emph{locality versus canonical group structure} as the true axis of difference (Section~\ref{sec:operators}).
		\item We establish the inherited Fourier theorems---translation covariance, the convolution theorem, an uncertainty bound, and a sampling identity (Section~\ref{sec:theorems}).
		\item We validate the theory (Section~\ref{sec:experiments}): structural identities to machine precision, and a \emph{same-filter} denoising study, across circulant hosts and three proper embeddings (path $P_8$, star $K_{1,3}$, diamond), that isolates the basis from the filter and exposes the host-extension method as a design degree of freedom.
		\item We give the dichotomy by structure that delimits where each framework is preferable (Section~\ref{sec:discussion}).
	\end{itemize}
	
	\noindent\textbf{Intended readership and scope.} This is a foundations paper, addressed first to readers interested in the mathematical structure of graph signal processing---exact transforms, translation groups, and convolution algebras---and only second to practitioners seeking immediate performance gains; Remark~\ref{rem:experiments} states plainly that the contribution is structural, not a denoising advantage. The setting is static graphs; extending the embedding viewpoint to time-varying networks, and engaging the growing literature on learned spectral filters on graphs (of which the Chebyshev construction we benchmark against~\cite{Defferrard2016} is the common ancestor), are outside our scope here and are natural continuations.
	
	\section{Preliminaries: Isometric Group Embedding}
	\label{sec:prelim}
	Throughout, $G=(V,E)$ is a finite connected graph with shortest-path metric $\dG$, and $\Gamma$ is a finite abelian group. We recall only what the harmonic analysis needs; the embedding theory---existence, compactness, the relations generalizing the Djokovi\'c--Winkler relation~\cite{DjokovicDistance1973,Winkler1984}, and the labeling algorithm---is developed in the companion paper \cite{FokamP1}, with dimension and order bounds in \cite{FokamP2} and full details in \cite{FokamThesis2026}. The underlying metric geometry of isometric embeddings into hypercubes and Cayley graphs is classical~\cite{DezaLaurent1997,ImrichKlavzar2000,ChungSpectral1997}.
	
	\begin{definition}[Cayley graph]
		For a generating set $S \subseteq \Gamma\setminus\{0\}$ with $S=-S$, the Cayley graph $\Cay(\Gamma,S)$ has vertex set $\Gamma$ and edges $\{g,g+s\}$ for $g\in\Gamma$, $s\in S$. It is vertex-transitive \cite{Godsil2001}, and its graph metric is translation invariant: $\dGam(x,y)=\dGam(0,y-x)$.
	\end{definition}
	
	\begin{definition}[Isometric embedding, lift, restriction]
		\label{def:embedding}
		An \emph{isometric embedding} of $G$ is an injection $\varphi:V\to\Gamma$ into some $\Cay(\Gamma,S)$ with $\dGam(\varphi(u),\varphi(v))=\dG(u,v)$ for all $u,v\in V$. Write $N=|\Gamma|$\ and call $\varepsilon=|V|/N\in(0,1]$\ the \emph{excursion ratio}. The lift $L:\C^V\to\C^\Gamma$\ sends $s$\ to $\tilde s$\ with $\tilde s(\varphi(v))=s(v)$\ and $\tilde s(g)=0$\ for $g\notin\varphi(V)$; the restriction $R:\C^\Gamma\to\C^V$\ is $(Rf)(v)=f(\varphi(v))$. Then $RL=\mathrm{Id}$.
	\end{definition}
	
	The results from \cite{FokamP1,FokamP2} that we use are the following. We first record the elementary existence guarantee---the \emph{isometric spanning-tree embedding}---because it is short, self-contained, and shows that every connected graph embeds, so that the subsequent theorems are about \emph{compactness}, not existence.
	
	\begin{theorem}[Isometric spanning-tree embedding \cite{FokamP1,FokamThesis2026}]
		\label{thm:naive}
		Every connected graph $G$\ on $n$\ vertices embeds isometrically into $\Cay(\Z_2^{\,n-1},S)$\ for a suitable generating set $S$\ with $|S|\le m=|E|$.
	\end{theorem}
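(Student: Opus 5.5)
The plan is to build the embedding explicitly from a spanning tree and to define the generating set so that every edge of $G$ becomes a Cayley edge. Fix a spanning tree $T$ of $G$ and a root $r$ (any spanning tree and any root will do). Index the $n-1$ tree edges by the standard basis $\{e_f\}_{f\in E(T)}$ of $\Z_2^{\,n-1}$. Define
\[
\varphi(v)=\sum_{f\in P_T(r,v)} e_f,
\]
where $P_T(r,v)$ is the set of edges on the unique tree path from $r$ to $v$; in particular $\varphi(r)=0$. For each edge $uv\in E$ put $s_{uv}=\varphi(u)+\varphi(v)$, and set $S=\{s_{uv}: uv\in E\}$. Then $|S|\le m$ is immediate, and $S=-S$ holds automatically in characteristic~$2$.

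The first key step is a linear-algebra fact: the vectors $\varphi(x)$, $x\neq r$, are linearly independent. To see this, note that $\varphi(x)$ contains the coordinate $e_{f_x}$ of the edge from $x$ to its parent. Ordering vertices by depth gives a triangular system, so independence follows. This yields three consequences.
\begin{itemize}
\item $\varphi$ is injective, so every $s_{uv}\neq 0$ and $S\subseteq\Gamma\setminus\{0\}$.
\item For a tree edge $f=\{x,\mathrm{parent}(x)\}$ we have $s_f=e_f$, so $S$ contains the standard basis and therefore generates $\Z_2^{\,n-1}$.
\item The upper bound $\dGam(\varphi(u),\varphi(v))\le \dG(u,v)$ holds, because each edge of a $u$--$v$ geodesic in $G$ is an edge of $\Cay(\Z_2^{\,n-1},S)$.
\end{itemize}

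The main obstacle is the reverse inequality, since $S$ contains non-tree generators that might create shortcuts. Suppose $\varphi(u)+\varphi(v)=s_{f_1}+\dots+s_{f_k}$ with $f_i\in E$. Let $F$ be the edge multiset $\{f_1,\dots,f_k\}$ and let $O$ be the set of vertices of odd degree in $F$. Summing the generators telescopes vertexwise, giving
\[
\sum_{i} s_{f_i}=\sum_{x}\deg_F(x)\,\varphi(x)=\sum_{x\in O}\varphi(x).
\]
Hence $\sum_{x\in O\setminus\{r\}}\varphi(x)=\sum_{x\in\{u,v\}\setminus\{r\}}\varphi(x)$. By the independence above, $O\setminus\{r\}=\{u,v\}\setminus\{r\}$. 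Because $|O|$ is even (handshake lemma), this forces $O=\{u,v\}$ exactly.

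A multigraph whose odd-degree vertices are precisely $u$ and $v$ contains a $u$--$v$ walk: take the component containing $u$, which must also contain $v$ since each component has an even number of odd vertices. That walk uses edges of $G$, so $k\ge |F|\ge \dG(u,v)$. Therefore $\dGam(\varphi(u),\varphi(v))=\dG(u,v)$, and the embedding is isometric, as claimed in Theorem~\ref{thm:naive}.
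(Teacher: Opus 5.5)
Your proof is correct and follows the same route as the paper's sketch: the spanning-tree GF$(2)$ labeling, $S$ taken as the label differences $\varphi(u)+\varphi(v)$ over edges $uv$ of $G$, and the parity argument that a generator word summing to $\varphi(u)+\varphi(v)$ gives an edge multiset with odd-degree set $\{u,v\}$, hence a $u$--$v$ path of length at most the word length. Your triangularity argument that $\{\varphi(x)\}_{x\neq r}$ is a basis of $\Z_2^{\,n-1}$ is a welcome addition: it supplies the injectivity, $0\notin S$, the generating property, and the identification $O=\{u,v\}$, all of which the paper's sketch leaves implicit.
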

	
	\begin{proof}[Proof sketch]
		Fix a spanning tree $T$\ of $G$\ rooted at $r$\ and assign its $n-1$\ edges the standard basis vectors $e_1,\dots,e_{n-1}$\ of $\Z_2^{\,n-1}$. Label each vertex $v$\ by $\lambda(v)=\sum_{e\in T[r,v]}e_{\iota(e)}$, the GF$(2)$ sum along the unique tree path. With $S=\{\lambda(u)\oplus\lambda(v):uv\in E\}$, every edge joins labels differing by a generator, so a $G$-path of length $\ell$\ maps to a Cayley walk of length $\ell$\ and $\dGam\le\dG$. Conversely, any generator word of length $\ell$\ summing to $\lambda(u)\oplus\lambda(v)$ corresponds to an edge set of $G$\ whose GF$(2)$ boundary is $\{u,v\}$, which contains a $u$--$v$\ path, so $\ell\ge\dG(u,v)$; hence $\dGam=\dG$.
	\end{proof}
	
	This embedding has $\varepsilon=n/2^{\,n-1}$, vanishing rapidly in $n$; the role of the next two results is to compress the host while preserving isometry.
	
	\begin{theorem}[Existence and compact embedding \cite{FokamP1,FokamP2,FokamThesis2026}]
		\label{thm:existence}
		Every finite connected graph $G$\ admits an isometric embedding into a Cayley graph of a finite abelian group. Moreover the labeling algorithm of \cite{FokamP1} returns a certified host whose order $N$\ is, for structured near-Cayley families (cycles, circulants, grids, tori, Hamming graphs), linear in $|V|$\ ($\varepsilon$\ bounded below), while for graphs whose metric resists a low-dimensional abelian structure the certified host found is binary, $\Gamma=\Z_2^k$, of order up to the universal bound $2^{|V|-1}$; all reported hosts are algorithmic upper bounds on the true minimum, whose landscape on small graphs is mapped exhaustively in \cite{FokamP2}.
	\end{theorem}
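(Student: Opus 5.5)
The first assertion is immediate from Theorem~\ref{thm:naive}. Given a finite connected $G$ on $n$ vertices, the spanning-tree labeling $\lambda$ embeds $G$ isometrically into $\Cay(\Z_2^{\,n-1},S)$. Two side conditions have to be checked.

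\emph{Injectivity.} Suppose $u\neq v$. Then $\lambda(u)\oplus\lambda(v)$ is the indicator of the tree path $T[u,v]$, which is nonempty, so the labels differ.

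\emph{Validity of $S$.} We need $0\notin S$ and $S=-S$. The second is automatic in characteristic $2$. For the first, each edge $uv$ contributes $\lambda(u)\oplus\lambda(v)\neq 0$ by injectivity.

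$S$ need not generate all of $\Z_2^{\,n-1}$. If a generating set is required, replace $\Gamma$ by the subgroup $\langle S\rangle$. Distances between label points are computed inside $\langle S\rangle$ anyway, so isometry is unaffected, and the host order stays at most $2^{n-1}$. This yields the universal bound $N\le 2^{|V|-1}$ quoted in the statement.

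The quantitative part I would prove family by family, exhibiting explicit hosts with $N=O(|V|)$. In each case isometry follows from the translation invariance $\dGam(x,y)=\dGam(0,y-x)$ together with a computation of the word length of a group element.

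\emph{Cycles $C_n$.} Take $\Cay(\Z_n,\{\pm1\})$ with the identity labeling, giving $\varepsilon=1$.

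\emph{Circulants $\Cay(\Z_n,\{\pm a_i\})$.} These are already Cayley graphs, so $\varepsilon=1$.

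\emph{Tori $C_{n_1}\square\cdots\square C_{n_k}$.} Take $\prod\Z_{n_i}$ with the coordinate generators. Word length is then the sum of the cyclic distances in each coordinate, which equals the Cartesian-product metric.

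\emph{Grids $P_{n_1}\square\cdots\square P_{n_k}$.} Embed $P_m$ into $\Z_{2m}$ by $i\mapsto i$. The cyclic distance $\min(|i-j|,2m-|i-j|)$ equals $|i-j|$ whenever $|i-j|\le m-1$, so the embedding is isometric. Taking products gives hosts of order $2^k|V|$, which is linear in $|V|$ for fixed dimension $k$.

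\emph{Hamming graphs $H(d,q)$.} Take $\Z_q^d$ with generators $\{t e_j : t\neq 0\}$, which gives $\varepsilon=1$.

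In each of these families $\varepsilon$ is bounded below, as claimed.

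The remaining clauses are not really mathematical claims. That the algorithm of \cite{FokamP1} returns a \emph{certified} host, falling back to a binary host for metrically unstructured graphs, is a property of the algorithm. I would cite it, noting only that certification amounts to checking $\dGam=\dG$ on all pairs by BFS in the host, so every returned host is trivially an upper bound on the minimum order. The main obstacle is making ``linear in $|V|$'' precise for grids. The $2^k$ factor grows with dimension, so the claim should be read for fixed $k$ unless the algorithm compresses further. I would state the result with that dependence made explicit rather than try to remove it.
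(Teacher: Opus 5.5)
Your proposal is correct and follows the same route the paper relies on. Existence comes from the spanning-tree embedding of Theorem~\ref{thm:naive}, and the linear-size hosts come from the same explicit Cayley constructions the paper uses: the graph itself when it is an abelian Cayley graph, $P_n\hookrightarrow C_{2n-2}$ and products for grids (your $\Z_{2m}$ is a slightly larger but equally valid choice), and $\Z_q^d$ for Hamming graphs, with the algorithmic clauses cited to the companion work. One small correction: your remark that $S$ need not generate $\Z_2^{\,n-1}$ is false, because each tree edge $uv$ contributes $\lambda(u)\oplus\lambda(v)=e_{\iota(uv)}$, so $S$ always generates, the passage to $\langle S\rangle$ is unnecessary, and the naive host has order exactly $2^{n-1}$.
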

	
	\begin{theorem}[Exact dimension of stars \cite{FokamP2,FokamThesis2026}]
		\label{thm:star}
		For every $q\ge 2$, the minimal binary host dimension of the star $K_{1,q}$\ is $k_{\min}(K_{1,q})=\lceil\log_2 q\rceil+1$. In particular $k_{\min}(K_{1,4})=3<4=\idim(K_{1,4})$: composite generators strictly beat the hypercube paradigm even on trees, and the leaves occupy a sum-free generating set of $\Z_2^k$.
	\end{theorem}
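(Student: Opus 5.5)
The plan is to reduce the statement to a counting fact about sum-free subsets of $\Z_2^k$, which gives the lower bound, and then to give an explicit sum-free generating configuration that meets it. Throughout, the host is $\Cay(\Z_2^k,S)$ with $S=-S$ automatic in characteristic $2$, and we may translate so that the centre $c$ of $K_{1,q}$ sits at $\varphi(c)=0$. This uses the translation invariance of $\dGam$ recorded after the definition of Cayley graphs.

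\textbf{Reduction.} Let $x_i=\varphi(\ell_i)$ for the leaves $\ell_1,\dots,\ell_q$. The isometry condition gives $\dGam(0,x_i)=1$, i.e.\ $x_i\in S$. For $i\neq j$ it also gives $\dGam(x_i,x_j)=\dGam(0,x_i+x_j)=2$, which forces $x_i\neq x_j$ and $x_i+x_j\notin S\cup\{0\}$. Conversely, suppose $X=\{x_1,\dots,x_q\}$ is a set of distinct elements of $S$ with $x_i+x_j\notin S$ for all $i\neq j$. Then $x_i+x_j$ is a sum of two generators and is nonzero, so its distance from $0$ is exactly $2$, and the embedding is isometric. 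Hence an embedding into $\Z_2^k$ exists if and only if some generating set $S$ contains a $q$-element subset $X$ with $(X+X)\cap S\subseteq\{0\}$. In particular $X$ itself is sum-free: $x+y\notin X$ for all $x,y\in X$, where the case $x=y$ is covered because $0\notin S$.

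\textbf{Lower bound.} Fix $x_1\in X$. The translate $X+x_1$ is disjoint from $X$: an element $x+x_1$ lying in $X$ would violate sum-freeness, and this includes $x=x_1$ since $0\notin X$. Both sets have $|X|=q$ elements inside a group of order $2^k$, so $2q\le 2^k$. Therefore $k\ge\log_2 q+1$, and since $k$ is an integer, $k\ge\lceil\log_2 q\rceil+1$.

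\textbf{Upper bound.} Take $k=\lceil\log_2 q\rceil+1$. Let $H=\{x\in\Z_2^k: x_1=1\}$, the nontrivial coset of the hyperplane $\{x_1=0\}$. Then $|H|=2^{k-1}\ge q$, and $H+H\subseteq\{x_1=0\}$, so every subset of $H$ is sum-free. Choose $X\subseteq H$ of size $q$ containing $e_1,e_1+e_2,\dots,e_1+e_k$; these $k$ elements are distinct and lie in $H$. This choice needs $k\le q$, which holds for all $q\ge2$ because $\lceil\log_2 q\rceil+1\le q$ there. These elements generate $\Z_2^k$. Set $S=X$: then $S$ is a generating set, and $(X+X)\cap S=\emptyset$ because $X+X$ lies in the other coset. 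By the reduction, sending $c\mapsto0$ and $\ell_i\mapsto x_i$ is an isometric embedding, so $k_{\min}(K_{1,q})=\lceil\log_2 q\rceil+1$ and the leaves occupy a sum-free generating set.

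For $q=4$ this gives $k_{\min}(K_{1,4})=3$. Comparing with $\idim(K_{1,4})=4$ uses the classical fact that a tree with $m$ edges has isometric hypercube dimension $m$.

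\textbf{Main obstacle.} The delicate step is the modelling in the reduction. One must check that requiring isometry only on pairs of vertices of $G$ loses nothing, so that extra generators in $S$ impose only the single condition $x_i+x_j\notin S$. One must also check that taking $S=X$ exactly, rather than enlarging $S$ and risking new shortcuts, still yields a generating set. The coset construction handles both points at once.
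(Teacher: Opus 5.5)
Your proof is correct and follows the paper's own route. You normalize the centre to $0$, reduce to a sum-free set of leaf labels, get $2q\le 2^k$ from the disjoint translate $X+x_1$, and attain the bound with a nontrivial coset of an index-two subgroup (your $\{x_1=1\}$ in place of the paper's odd-weight vectors). You are slightly more careful than the paper's sketch in two places it leaves implicit: you allow $S\supsetneq X$ rather than assuming $S$ is exactly the leaf labels, and you check that the chosen $q$ labels actually generate $\Z_2^k$ using $\lceil\log_2 q\rceil+1\le q$.
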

	
	\begin{proof}[Proof sketch (self-contained)]
		Normalize the embedding so the centre maps to $\mathbf{0}$\ (Cayley graphs are vertex-transitive). The $q$\ leaves map to distinct labels $s_1,\dots,s_q$, and since the only edges are centre--leaf, the generating set is exactly $S=\{s_1,\dots,s_q\}$. Leaves are pairwise at distance $2$, so $s_i+s_j\notin S$\ for $i\ne j$: $S$\ is a \emph{sum-free} subset of $\Z_2^k$, and conversely sum-freeness suffices. The maximum size of a sum-free set in $\Z_2^k$\ is $2^{k-1}$\ (if $S$\ is sum-free and $s\in S$, then $S$\ and $S+s$\ are disjoint of equal size; the odd-weight vectors attain the bound). Hence $K_{1,q}$\ embeds in dimension $k$\ iff $q\le 2^{k-1}$, i.e.\ iff $k\ge\lceil\log_2 q\rceil+1$. Full details are in \cite[Thm.~4]{FokamP2}.
	\end{proof}
	
	Theorem~\ref{thm:existence} guarantees the substrate on which the harmonic analysis below is built; Theorem~\ref{thm:star} supplies the minimal hosts for the non-Cayley test graphs of Section~\ref{sec:experiments}. When $G$\ is itself an abelian Cayley graph (a cycle, torus, circulant, or circular ladder) the embedding is onto, $\varepsilon=1$, and the theory reduces to classical Euclidean signal processing.
	
	\emph{Preprocessing cost.} The harmonic analysis below treats the embedding as given; we state its cost plainly. The labeling algorithm of~\cite{FokamThesis2026} partitions edges into candidate same-generator classes (via the $\varphi/\Phi/\Psi$\ relations generalizing Djokovi\'c--Winkler), computes a GF$(2)$ quotient labeling, and runs a shortcut-repair loop that provably terminates in the isometric spanning-tree embedding of Theorem~\ref{thm:naive}, certifying $k\le n-1$\ on every connected graph; the construction was verified exhaustively on all $995$\ connected graphs of at most seven vertices. The algorithm of~\cite{FokamThesis2026} runs in $O\bigl(n(n+m)+m^2+R(2^k m+n^2)\bigr)$ time, where $R$\ is the number of shortcut-repair rounds: for partial cubes and structured benchmark families a single round suffices ($R=1$) and the polynomial terms $O(n(n+m)+m^2)$ dominate, while the exponential term $2^k m$\ becomes the bottleneck only when the host dimension $k$\ is itself large---i.e.\ precisely the $\varepsilon\to0$\ graphs the method does not target. The quality of the initial edge partition strongly controls $k$: for the Pappus graph a structured nine-class partition yields $k=7$\ in one round, whereas a greedy partition converges only to $k=12$. Concerning bounds on $R$: since each repair round strictly refines the partition by peeling one edge into its own class, and a partition of $m$\ edges admits at most $m-1$\ strict refinements, the universal analytic bound $R\le m$\ holds on every input, so the worst-case preprocessing cost is finite and explicit; moreover $R=1$\ whenever the initial partition already satisfies the cocycle condition with an isometric quotient, which holds in particular for partial cubes under the cut partition and for the structured constructors on cycles, paths, and complete graphs \cite{FokamP1}. Sharper bounds on $R$\ for natural graph classes (planar, bounded treewidth) are open, and we flag them as future work. The embedding is therefore a one-time preprocessing step whose cost is amortized over all subsequent filtering; the GE-GFT analysis itself is a fast transform on the resulting host (Section~\ref{sec:experiments}).
	
	\begin{remark}[Refining a host by enlarging the modulus]
		\label{rem:modulus}
		If $G$\ embeds isometrically into a Cayley graph of $\Z_2^k$, the same vertex labeling embeds it isometrically into a Cayley graph of $\Z_m^k$\ for any $m\ge 2$: enlarging the modulus only adds host vertices outside the image and cannot create a shortcut between image vertices, so isometry is preserved while the excursion ratio $\varepsilon=|V|/N$\ decreases. The binary host is thus the densest member of a family of valid hosts; where a non-binary factor is forced by the metric---as for the diamond, whose degree-two pair at distance $2$\ requires a $\Z_3$\ factor---it appears in the minimal host of its own accord. This freedom matters when a downstream application constrains the host arithmetic, for instance the wavelet localization of the companion work~\cite{FokamThesis2026}.
	\end{remark}
	
	\begin{figure}[t]
		\centering
		\includegraphics[width=0.7\linewidth]{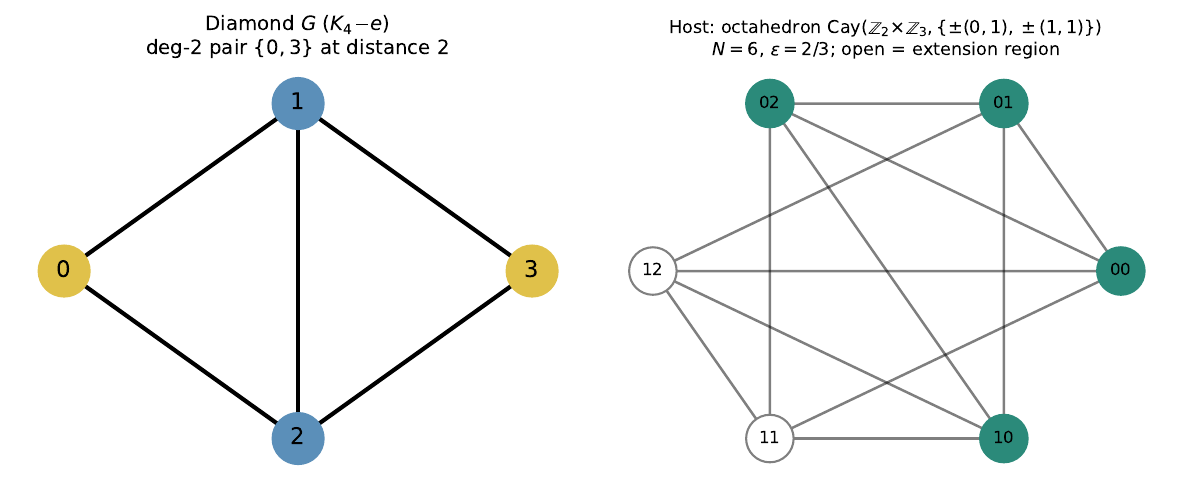}
		\caption{Isometric embedding of the diamond graph ($K_4$\ minus an edge) into the octahedron $\Cay(\Z_2\times\Z_3,\{\pm(0,1),\pm(1,1)\})$, $N=6$, $\varepsilon=2/3$\ (host and labeling verified by exhaustive search). The two degree-2 vertices $\{0,3\}$\ map to $(0,0),(1,0)$\ at host distance $2$. Open nodes form the extension region (the complement of the image), used in Section~\ref{sec:experiments}.}
		\label{fig:embedding}
	\end{figure}
	
	\section{The Group-Embedding Graph Fourier Transform}
	\label{sec:gft}
	
	\begin{definition}[Characters and the GE-GFT]
		\label{def:gft}
		The dual $\widehat\Gamma$\ consists of the characters $\chi_k:\Gamma\to\C$, $k\in\widehat\Gamma\cong\Gamma$, which are orthonormal: $\frac1N\sum_g \chi_k(g)\overline{\chi_\ell(g)}=\delta_{k\ell}$. The group-embedding graph Fourier transform of $s\in\C^V$\ is
		\begin{equation}
			\hat s(k)=\frac1{\sqrt N}\sum_{g\in\Gamma}\tilde s(g)\,\overline{\chi_k(g)},
			\qquad
			\tilde s(g)=\frac1{\sqrt N}\sum_k \hat s(k)\,\chi_k(g),
		\end{equation}
		where $\tilde s=Ls$. The matrix $F_{k,g}=\chi_k(g)/\sqrt N$\ is unitary.
	\end{definition}
	
	For a product host $\Gamma=\Z_{N_1}\times\cdots\times\Z_{N_d}$\ the characters are products of roots of unity, $\chi_k(g)=\prod_r \exp(2\pi i\,k_r g_r/N_r)$, and the GE-GFT is a mixed-radix fast Fourier transform~\cite{VanLoan1992,MaslenRockmore1997}. We write $|k|$\ for the \emph{frequency magnitude}, the word length of $k$\ on the dual generating set, which orders characters from smooth (small $|k|$) to oscillatory (large $|k|$). On a binary host $\Gamma=\Z_2^k$\ the characters are the real Walsh functions $\chi_a(x)=(-1)^{\langle a,x\rangle}$, and $|a|$\ is the Hamming weight of $a$.
	
	\begin{proposition}[Plancherel on the lifted subspace]
		\label{prop:plancherel}
		For every $s\in\C^V$, $\sum_k|\hat s(k)|^2=\sum_{g\in\Gamma}|\tilde s(g)|^2 =\sum_{v\in V}|s(v)|^2$. The GE-GFT is a unitary map on the lifted subspace $L(\C^V)\subseteq\C^\Gamma$.
	\end{proposition}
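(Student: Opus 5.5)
The identity splits into two equalities, and I will prove them separately. For the first, $\sum_k|\hat s(k)|^2=\sum_{g}|\tilde s(g)|^2$, the point is that $\hat s=\overline{F}\,\tilde s$, where $F_{k,g}=\chi_k(g)/\sqrt N$ is the matrix of Definition~\ref{def:gft}. More precisely, $\hat s(k)=\sum_g \overline{F_{k,g}}\,\tilde s(g)$.

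Unitarity of $F$ is equivalent to orthonormality of its rows under the normalized inner product:
\[
(FF^*)_{k\ell}=\tfrac1N\sum_g\chi_k(g)\overline{\chi_\ell(g)}=\delta_{k\ell}.
\]
This is exactly the character orthogonality recorded in Definition~\ref{def:gft}. Since $F$ is square ($|\widehat\Gamma|=|\Gamma|=N$), $FF^*=I$ gives $F^*F=I$. The entrywise conjugate of a unitary matrix is again unitary, so $\overline F$ is unitary. Hence $\|\hat s\|_2=\|\overline F\tilde s\|_2=\|\tilde s\|_2$.

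As a consistency check, the inverse formula in Definition~\ref{def:gft} is $\tilde s=F^{T}\hat s=(\overline F)^{*}\hat s$. This agrees with $\overline F$ being unitary.

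For the second equality, $\sum_g|\tilde s(g)|^2=\sum_v|s(v)|^2$, I will use the definition of the lift $L$ in Definition~\ref{def:embedding}. The lift $\tilde s$ vanishes off $\varphi(V)$ and equals $s(v)$ at $\varphi(v)$. Since $\varphi$ is injective, the sum over $\Gamma$ reduces to a sum over $\varphi(V)$, which reindexes bijectively to a sum over $V$. This shows that $L$ is an isometry $\C^V\to\C^\Gamma$. Its adjoint is $L^*=R$, and $RL=\mathrm{Id}$.

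For the final assertion, the GE-GFT is the composite $\overline F L$. On the subspace $L(\C^V)$ it acts as $\overline F$ restricted to that subspace. A restriction of a unitary map is an isometry onto its image $\overline F\,L(\C^V)$, and it is therefore unitary as a map between these two $|V|$-dimensional spaces.

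The main obstacle is bookkeeping rather than mathematics. One must check that the paper's normalization conventions ($1/\sqrt N$ in both directions, and conjugation on the analysis side) are consistent with the stated orthonormality. One must also state precisely what ``unitary on the lifted subspace'' means, namely an isometry onto its image, since $L(\C^V)$ is not invariant under $\overline F$ unless $\varepsilon=1$.
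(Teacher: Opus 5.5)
Your proof is correct and follows the same route as the paper: unitarity of the character matrix gives $\sum_k|\hat s(k)|^2=\sum_g|\tilde s(g)|^2$, and the lift being a zero-extension along the injective $\varphi$ gives $\sum_g|\tilde s(g)|^2=\sum_v|s(v)|^2$. Your added bookkeeping fills in what the paper's one-line proof leaves implicit: the analysis map is really $\overline{F}$, which is unitary because $F$ is, and ``unitary on $L(\C^V)$'' has to be read as an isometry onto its image.
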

	\begin{proof}
		$F$\ is unitary, so it preserves the $\ell^2$\ norm of $\tilde s$; and $\sum_g|\tilde s(g)|^2=\sum_v|s(v)|^2$\ because the lift is an isometric zero-extension (Definition~\ref{def:embedding}).
	\end{proof}
	
	Two facts are used repeatedly: the GE-GFT is unitary on $L(\C^V)$\ (Plancherel), and translation is exact on the host (Section~\ref{sec:operators}). Both fail for the eigenvector ``Fourier'' transform of a generic graph; they hold here because the host is a group and $\varphi$\ is isometric, so host distance equals graph distance and a character of small magnitude is a slowly varying function along the graph.
	
	\section{Shift and Convolution: GE-GSP versus Laplacian-GSP}
	\label{sec:operators}
	
	The two frameworks differ most sharply in their shift and convolution operators. We make the comparison precise, then identify the single trade-off that organizes it.
	
	\subsection{The GE-GSP translations: a family of $N$\ unitary shifts}
	
	In contrast with matrix-based GSP, which designates a single shift operator (the adjacency or Laplacian matrix), the embedded framework inherits from the host group the \emph{entire} family of group translations. This is not a stylistic difference but the correct harmonic-analytic picture: on a finite abelian group $\Gamma$\ of order $N$, the left regular representation supplies $N$\ translation operators, one per group element, each a permutation matrix.
	
	\begin{definition}[Translation]
		\label{def:translation}
		For $h\in\Gamma$\ the translation $T_h$\ acts on lifts by $(T_h\tilde s)(g)=\tilde s(g-h)$.
	\end{definition}
	
	\begin{proposition}[Group law, unitarity, modulation]
		\label{prop:shift}
		The translations satisfy: (i) $T_gT_h=T_{g+h}$\ and $T_0=\mathrm{Id}$, so $\{T_h\}_{h\in\Gamma}$\ is a group isomorphic to $\Gamma$; (ii) each $T_h$\ is a permutation matrix, hence unitary, $\lVert T_h\tilde s\rVert=\lVert\tilde s\rVert$, and for $h\neq0$\ it is a \emph{derangement} (it fixes no vertex), while $T_0$\ is the identity; (iii) $\widehat{T_h\tilde s}(k)=\overline{\chi_k(h)}\,\hat s(k)$, i.e.\ translation is modulation in the dual.
	\end{proposition}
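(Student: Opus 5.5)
The plan is to verify the three parts in order directly from Definition~\ref{def:translation} and Definition~\ref{def:gft}, using only the abelian group structure of $\Gamma$ and the multiplicativity of characters.

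For (i), I compute $(T_g T_h\tilde s)(x) = (T_h\tilde s)(x-g) = \tilde s(x-g-h) = (T_{g+h}\tilde s)(x)$. Also $T_0=\mathrm{Id}$ trivially. Hence $h\mapsto T_h$ is a homomorphism from $\Gamma$ into the invertible operators on $\C^\Gamma$. To see that its image is isomorphic to $\Gamma$, I show the map is injective. If $T_h=\mathrm{Id}$, apply it to the indicator $\delta_0$; this gives $\delta_h=\delta_0$, so $h=0$. Commutativity of the family comes for free from $g+h=h+g$.

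For (ii), I observe that $T_h$ sends the basis vector $\delta_x$ to $\delta_{x+h}$. Since $x\mapsto x+h$ is a bijection of $\Gamma$, $T_h$ is a permutation matrix. Permutation matrices are real orthogonal, hence unitary, so $\lVert T_h\tilde s\rVert^2=\sum_g|\tilde s(g-h)|^2=\lVert\tilde s\rVert^2$ by reindexing. For the derangement claim, a fixed point would be some $x$ with $x+h=x$, which forces $h=0$. So for $h\ne0$ no host vertex is fixed, while $T_0$ fixes every vertex.

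For (iii), I substitute into the GE-GFT: $\widehat{T_h\tilde s}(k)=\frac1{\sqrt N}\sum_g \tilde s(g-h)\overline{\chi_k(g)}$. I then change variables $g'=g-h$ and use the character identity $\chi_k(g'+h)=\chi_k(g')\chi_k(h)$, which pulls out the factor $\overline{\chi_k(h)}$ and leaves $\hat s(k)$. Here the hat on the right-hand side means the transform of $\tilde s$ as a host function, as in Definition~\ref{def:gft}.

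The only point needing care is interpretive, not computational. $T_h\tilde s$ is generally not itself a lift $Lu$ of a graph signal, because translation can move the support off $\varphi(V)$. So I will state explicitly that the transform in (iii) is the host transform $F$ applied to an arbitrary element of $\C^\Gamma$. That is legitimate because $F$ is defined and unitary on all of $\C^\Gamma$. With that convention fixed, every step is a one-line reindexing.
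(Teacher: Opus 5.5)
Your proposal is correct and follows the paper's proof essentially step for step. Part (i) uses the same reindexing, part (ii) the same coordinate-permutation and no-fixed-point argument, and part (iii) the same change of variables $g'=g-h$ together with multiplicativity of characters. Your injectivity check via $T_h\delta_0=\delta_h$, which is needed for the ``isomorphic to $\Gamma$'' claim, and your remark that the hat in (iii) denotes the host transform of an arbitrary element of $\C^\Gamma$ fill in two points the paper leaves implicit.
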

	\begin{proof}
		(i) is immediate from $(g-h)-h'=g-(h+h')$. (ii) holds because $T_h$\ permutes the coordinates of $\C^\Gamma$; for $h\neq0$, $g-h\neq g$\ for all $g$, so no diagonal entry is nonzero and the permutation is fixed-point-free. For (iii), $\widehat{T_h\tilde s}(k)=\frac1{\sqrt N}\sum_g\tilde s(g-h)\overline{\chi_k(g)} =\frac1{\sqrt N}\sum_{g'}\tilde s(g')\overline{\chi_k(g'+h)} =\overline{\chi_k(h)}\,\hat s(k)$, using $\chi_k(g'+h)=\chi_k(g')\chi_k(h)$.
	\end{proof}
	
	\emph{There is, in general, no single distinguished shift.} In classical DSP the host is the cyclic group $\Z_N$, which is \emph{monogenic}: a single generator (the cyclic shift $z^{-1}$) has order $N$, and all other translations are its powers, so one matrix suffices. Our hosts are typically \emph{non-cyclic} abelian groups---for instance $\Z_2^k$, in which every nonzero element has order $2$---and there no single $T_h$\ generates the group; the shift structure is irreducibly the \emph{family} $\{T_h\}$, generated by a set of translations rather than a single one. This is why framing the comparison as ``their shift versus our shift'' is a category error: the natural object on a group is the full translation family, and the single-shift picture is the special feature of the cyclic case.
	
	The characters diagonalize \emph{every} $T_h$\ at once: by Proposition~\ref{prop:shift}(iii) the matrix $F$\ of characters satisfies $F^{*}T_hF=\mathrm{diag}(\overline{\chi_k(h)})_k$\ for all $h$\ simultaneously (verified numerically to $2\times10^{-17}$\ on the $\Z_2^3$\ host). Since group convolution is a linear combination of all translations (Section~\ref{sec:conv}), $F$\ diagonalizes every filter as well; the single Fourier basis serves the entire shift family and all convolutions at once.
	
	The map $h\mapsto T_h$\ is the \emph{left regular representation} of $\Gamma$\ on $\C^\Gamma\cong\ell^2(\Gamma)$. Because $\Gamma$\ is abelian, all its irreducible representations are one-dimensional, so this representation diagonalizes into the characters \cite{Terras1999,Folland2016}. This is the representation-theoretic reason translation acts as pure modulation in the dual, and it is what fails for the eigenvector ``Fourier'' transform of a generic graph, whose shift is not a group action.
	
	\subsection{Convolution is genuine group convolution}
	\label{sec:conv}
	
	\begin{definition}[Group convolution and filtering]
		\label{def:conv}
		For $\tilde s,\tilde k\in\C^\Gamma$, the group convolution is $(\tilde s*\tilde k)(g)=\sum_{h\in\Gamma}\tilde s(h)\,\tilde k(g-h)$. A GE-GSP filter is the operator $H_k:\tilde s\mapsto \tilde s*\tilde k$\ for a fixed kernel $\tilde k$; equivalently $H_k=\sum_{h}\tilde k(h)\,T_h$, a superposition of \emph{all} the translations of Proposition~\ref{prop:shift}, weighted by the kernel. Convolution is thus not built on one shift but on the entire translation family at once.
	\end{definition}
	
	\begin{theorem}[Convolution theorem]
		\label{thm:conv}
		For all $\tilde s,\tilde k\in\C^\Gamma$, $\widehat{\tilde s*\tilde k}(k)=\sqrt N\,\hat s(k)\,\hat{\tilde k}(k)$. Hence every GE-GSP filter acts as pointwise multiplication in the dual, and the family of filters $\{H_k\}$\ is commutative and simultaneously diagonalized by the GE-GFT.
	\end{theorem}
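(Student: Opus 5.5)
The plan is to prove the identity by direct computation from Definition~\ref{def:gft}, using only the multiplicativity of characters, $\chi_k(a+b)=\chi_k(a)\chi_k(b)$. Then I would derive the structural consequences from the factorization. In the statement, $\hat s(k)$ denotes the transform of $\tilde s$ with the normalization $1/\sqrt N$, and $\hat{\tilde k}(k)$ is the transform of the kernel with the same normalization. This explains the factor $\sqrt N$: the product of two normalized transforms carries $1/N$, while the transform of the convolution carries only $1/\sqrt N$.

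\textbf{Step 1.} Expand
\[
\widehat{\tilde s*\tilde k}(k)=\frac1{\sqrt N}\sum_{g}\sum_{h}\tilde s(h)\,\tilde k(g-h)\,\overline{\chi_k(g)}.
\]
Interchange the finite sums and substitute $g=h+u$, where $u$ ranges over $\Gamma$ as $g$ does for fixed $h$. Then write $\overline{\chi_k(h+u)}=\overline{\chi_k(h)}\,\overline{\chi_k(u)}$. The double sum factors as
\[
\frac1{\sqrt N}\Bigl(\sum_h\tilde s(h)\overline{\chi_k(h)}\Bigr)\Bigl(\sum_u\tilde k(u)\overline{\chi_k(u)}\Bigr)=\frac1{\sqrt N}\cdot\sqrt N\,\hat s(k)\cdot\sqrt N\,\hat{\tilde k}(k),
\]
which gives $\sqrt N\,\hat s(k)\hat{\tilde k}(k)$. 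An alternative route is to write $H_k=\sum_h\tilde k(h)T_h$ as in Definition~\ref{def:conv} and apply Proposition~\ref{prop:shift}(iii) termwise, by linearity:
\[
\widehat{H_k\tilde s}(k)=\sum_h\tilde k(h)\overline{\chi_k(h)}\,\hat s(k)=\sqrt N\,\hat{\tilde k}(k)\hat s(k).
\]
I would present this second route, since it reuses the earlier result and makes the link to the translation family explicit.

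\textbf{Step 2.} Step~1 says that $F H_k F^{*}$, with the conventions of Definition~\ref{def:gft}, is the diagonal matrix $\mathrm{diag}\bigl(\sqrt N\,\hat{\tilde k}(k)\bigr)_k$. Hence every filter is pointwise multiplication in the dual, and the single unitary $F$ diagonalizes all filters simultaneously. Diagonal matrices commute, so $H_kH_{k'}=H_{k'}H_k$. As an independent check, commutativity also follows directly from $T_gT_h=T_{g+h}=T_hT_g$ in Proposition~\ref{prop:shift}(i), since each $H_k$ is a linear combination of translations.

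There is no real analytic obstacle here, because all sums are finite. The only care needed is bookkeeping. One point is the $\sqrt N$ normalization, together with the overloaded symbol $k$, which the statement uses both as a frequency index and as a kernel subscript; I would rename the frequency to avoid confusion. The other is the ordering convention in $F^{*}T_hF$ versus $FT_hF^{*}$, which must match the one used after Proposition~\ref{prop:shift}.
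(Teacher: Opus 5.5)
Your argument is correct and is essentially the paper's proof. The paper does exactly your first route: it factors $\overline{\chi_k(g)}=\overline{\chi_k(h)}\,\overline{\chi_k(g-h)}$ and reindexes. Your preferred second route, via $H_k=\sum_h\tilde k(h)T_h$ and Proposition~\ref{prop:shift}(iii), is the same substitution packaged through the modulation property, and your Step~2 supplies the ``hence'' consequences that the paper leaves implicit.

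One bookkeeping fix concerns the convention you flagged yourself. Since the GE-GFT uses $\overline{\chi_k}$, its analysis matrix is $\overline F$, not $F$. The diagonal form is therefore $\overline F H_k\overline F^{\,*}=\mathrm{diag}\bigl(\sqrt N\,\hat{\tilde k}(k)\bigr)_k$. The literal $FH_kF^{*}$ you wrote has diagonal entries $\sum_h\tilde k(h)\chi_k(h)$ instead, i.e.\ the value at the reflected frequency $-k$.
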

	\begin{proof}
		Direct computation with characters: $\widehat{\tilde s*\tilde k}(k)=\frac1{\sqrt N}\sum_g\big(\sum_h \tilde s(h) \tilde k(g-h)\big)\overline{\chi_k(g)} =\frac1{\sqrt N}\sum_h\tilde s(h)\overline{\chi_k(h)}\sum_{g}\tilde k(g-h) \overline{\chi_k(g-h)}=\sqrt N\,\hat s(k)\,\hat{\tilde k}(k)$.
	\end{proof}
	
	The contrast with Laplacian-GSP is structural, not cosmetic. There, the GFT is $U^\top$\ for $U$\ a matrix of Laplacian eigenvectors; ``convolution'' is \emph{defined} as $f*_L g:=U\big((U^\top f)\odot(U^\top g)\big)$, so the convolution theorem holds by construction; and filtering is a polynomial $H=p(A)$\ in the shift $A$, an operator that is neither unitary nor a group element. A further structural distinction: group convolution possesses an \emph{identity element}---the delta at the group identity, $\delta_0$, satisfies $\tilde s*\delta_0=\tilde s$\ since $\sum_h\delta_0(h)T_h=T_0=\mathrm{Id}$---so the GE-GSP filters form a unital commutative algebra (the group algebra $\C[\Gamma]$). Matrix-polynomial filtering has no such identity kernel in the signal domain: there is no graph signal whose vertex-convolution with every signal returns that signal, because the construction lacks a translation indexed by a group identity. Table~\ref{tab:operators} sets the two side by side.
	
	\begin{table}[t]
		\centering
		\caption{Shift and convolution operators in the two frameworks.}
		\label{tab:operators}
		\footnotesize
		\setlength{\tabcolsep}{4pt}
		\renewcommand{\arraystretch}{1.15}
		\begin{tabular}{@{}p{3.2cm}p{4.2cm}p{4.2cm}@{}}
			\toprule
			Property & Laplacian-GSP & GE-GSP (ours) \\
			\midrule
			Shift operator(s) & single matrix $A$\ or $L$\ & family $\{T_h\}$\ of $N$\ translations \\
			Each shift moves samples? & no: $A$\ \emph{mixes} (averages) & yes: $T_h$\ permutes (derangement) \\
			Energy-preserving? & no, $\lVert As\rVert\!\neq\!\lVert s\rVert$\ & yes (Prop.~\ref{prop:shift}) \\
			Algebra of shifts & polynomial $\{p(A)\}$\ & group $\{T_h\}\!\cong\!\Gamma$\ \\
			Genuine translation? & no group action & yes (regular rep.) \\
			Convolution & spectral, $U(\cdot\!\odot\!\cdot)$\ & group conv.\ (Def.~\ref{def:conv}) \\
			Conv.\ theorem & by definition & theorem (Thm.~\ref{thm:conv}) \\
			Conv.\ identity elt. & none & $\delta_0$\ (group identity) \\
			Canonical basis? & no (eigenspace rot.) & yes (characters) \\
			Signal domain & $G$\ itself & host $\Gamma$; $=\!G$\ iff $\varepsilon\!=\!1$\ \\
			\bottomrule
		\end{tabular}
	\end{table}
	
	\begin{figure}[t]
		\centering
		\includegraphics[width=0.7\linewidth]{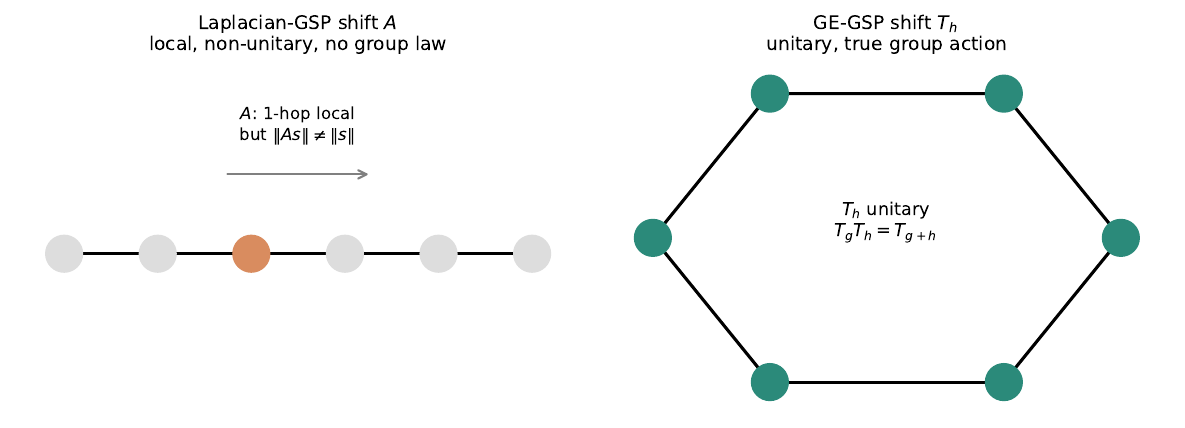}
		\caption{Shift and translation in the two frameworks. \textbf{Left:} the Laplacian/adjacency $A$\ is not a translation at all---it \emph{mixes} each sample with its $1$-hop neighbours (a weighted average), so it moves no sample to a new vertex, is not a permutation, and is not unitary ($\lVert As\rVert\neq\lVert s\rVert$); the family $\{p(A)\}$\ carries no group law. \textbf{Right:} the GE-GSP framework inherits the full family of $N$\ group translations $\{T_h\}$, each a unitary permutation (a derangement for $h\neq0$) obeying $T_gT_h=T_{g+h}$. Each $T_h$\ genuinely relocates every sample on the host Cayley graph, with no sample left out. When $\varepsilon<1$\ a translate may place mass on the host complement; this is the exact analogue of zero-padding for linear convolution in classical DSP (Remark~\ref{rem:tradeoff}) and does not make the translation any less genuine.}
		\label{fig:operators}
	\end{figure}
	
	\subsection{Head-to-head on a shared example: the star $K_{1,4}$}
	\label{sec:headtohead}
	
	A concrete comparison sharpens the structural point. Shi and Moura~\cite{ShiMoura2019}, in developing convolution and modulation for graph signals, introduce a spectral shift $M=\mathrm{GFT}\,\Lambda^{*}\,\mathrm{GFT}^{-1}$\ and work it out explicitly on the five-vertex star $K_{1,4}$\ (their Example~8). We run GE-GSP on the \emph{same} graph: $K_{1,4}$\ embeds isometrically into $\Cay(\Z_2^3,\{001,010,100,111\})$ with the centre at $000$\ and the four leaves at the generators, $N=8$, $\varepsilon=5/8$. Table~\ref{tab:headtohead} and Fig.~\ref{fig:headtohead} put the two operators side by side; all GE-GSP numbers are computed on this host.
	
	The contrast is structural and reflects the two frameworks' different goals (their $M$\ is built to expose a sampling duality, not to be a unitary translation). The star's adjacency eigenvalue $0$\ has multiplicity $N-2=3$, so the eigenvector GFT is fixed only up to a $3\times3$\ rotation: the basis is \emph{non-canonical}. The resulting $M$\ is dense, asymmetric, and not unitary (its spectral norm is $2$, so it does not preserve energy, and $\lVert MM^{H}-I\rVert\approx4.58$). The GE-GSP translation on the host, by contrast, is a unitary permutation ($\lVert T_hT_h^{H}-I\rVert=0$\ to machine precision), obeys the exact group law $T_gT_h=T_{g+h}$, acts on a canonical character basis (no degenerate eigenspace, hence no rotational ambiguity). Both frameworks have an exact convolution theorem---theirs by spectral construction, ours as a derived identity ($3\times10^{-16}$\ relative error)---so that property does not separate them; what separates them is that the GE-GSP translation is unitary and its basis canonical, while $M$\ is neither. We emphasize this is not a defect in~\cite{ShiMoura2019}: their $M$\ is a deliberately different object serving a sampling-duality purpose. The point is that where a genuine unitary translation and a canonical basis are wanted, embedding $K_{1,4}$\ into a compact abelian host supplies both, at the cost of the $3$-vertex host complement ($\varepsilon=5/8$).
	
	\begin{table}[t]
		\centering
		\caption{Shift and convolution operators on the same graph, the star $K_{1,4}$: the Shi--Moura spectral shift $M$~\cite{ShiMoura2019} on the star's own eigenbasis versus the GE-GSP translation $T_h$\ on the host $\Cay(\Z_2^3,\cdot)$ ($\varepsilon=5/8$). GE-GSP values are computed on the host.}
		\label{tab:headtohead}
		\footnotesize
		\setlength{\tabcolsep}{4pt}
		\renewcommand{\arraystretch}{1.2}
		\begin{tabular}{@{}p{5.0cm}cc@{}}
			\toprule
			Property & Shi--Moura $M$ & GE-GSP $T_h$ \\
			\midrule
			Unitary / energy-preserving & no ($4.58$) & yes ($10^{-16}$) \\
			Spectral norm (energy gain) & $2.0$ & $1.0$ \\
			Exact group law & no & yes ($10^{-16}$) \\
			Canonical basis & no (mult.\ $3$) & yes \\
			Convolution theorem & exact (constr.) & exact ($10^{-16}$) \\
			\bottomrule
		\end{tabular}
	\end{table}
	
	\begin{figure*}[t]
		\centering
		\includegraphics[width=\textwidth]{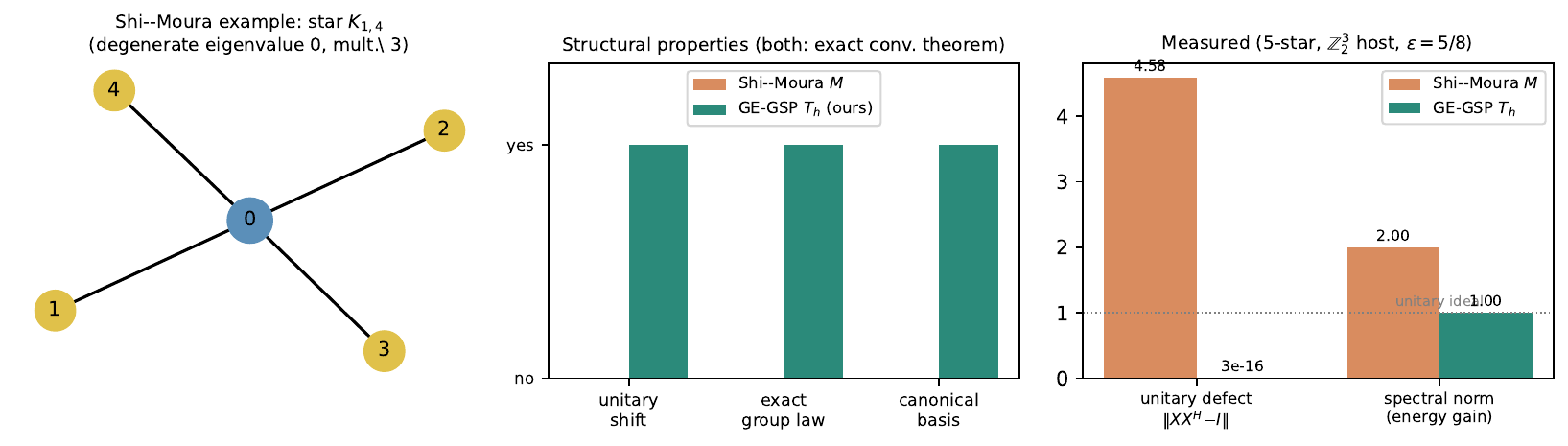}
		\caption{Head-to-head on the star $K_{1,4}$. \textbf{Left:} the graph, whose adjacency eigenvalue $0$\ has multiplicity $3$, making the eigenvector GFT non-canonical. \textbf{Center:} structural properties; \emph{both} frameworks have an exact convolution theorem (the Shi--Moura one holds by spectral construction), so the distinction is elsewhere---the Shi--Moura spectral shift $M$~\cite{ShiMoura2019} is non-unitary, carries no group law, and rests on a non-canonical basis, whereas the GE-GSP translation $T_h$\ on the $\Z_2^3$\ host has all three. \textbf{Right:} the measured difference that matters, on the host ($\varepsilon=5/8$): $M$\ has unitary defect $4.58$\ and spectral norm $2$\ (it amplifies energy), while $T_h$\ is unitary to machine precision.}
		\label{fig:headtohead}
	\end{figure*}
	
	\subsection{The trade-off: locality versus canonical group structure}
	
	\begin{remark}[The trade-off, and why the translation is genuine]
		\label{rem:tradeoff}
		The Laplacian shift $A$\ is \emph{intrinsically local}: one application moves signal energy exactly one hop on $G$. The group translation $T_h$\ is local on the host; when $\varepsilon<1$\ a single $T_h$\ can move energy onto the complement $\Gamma\setminus\varphi(V)$, so restricted to $G$\ it need not be a one-hop move. We stress that this is \emph{not} a sign that the GE-GSP translation is somehow less genuine than a time shift---it is the exact graph analogue of a situation that is routine and correct in classical DSP. Computing the linear convolution of a length-$n$\ signal by a circular (DFT) convolution requires extending the signal to length $\ge 2n-1$; this is precisely the isometric embedding $P_n\hookrightarrow C_{2n-2}$\ of a path into a cycle (Appendix~\ref{app:embed}). There too a circular shift moves samples off the original support and onto the padded region, yet the convolution recovered on the support is exactly the genuine linear convolution, and no one regards the circular shift as a defective translation. The same holds here: $T_h$\ is a true unitary translation obeying an exact group law (Proposition~\ref{prop:shift}), and the excursion onto the complement is the graph counterpart of zero-padding for linear convolution, not a weakness peculiar to the framework.
		
		The genuine trade-off is therefore one of \emph{domain size}, not of authenticity. Laplacian-GSP keeps the signal on $G$\ itself, at the cost of a non-unitary shift, no translation group, and a non-canonical basis. GE-GSP gains a unitary translation group, an exact convolution theorem, and a canonical basis, at the cost of working on a host larger than $G$\ when $\varepsilon<1$---exactly as classical linear convolution works on a domain larger than the signal's support. As Section~\ref{sec:experiments} shows, the $(1-\varepsilon)N$\ extension region is a usable design degree of freedom, controlled by the choice of host completion, rather than dead weight.
	\end{remark}
	
\section{Inherited Fourier Theorems}
\label{sec:theorems}

Because the host carries an exact Fourier analysis, the classical theorems transfer to $G$\ through the isometric embedding. We record the four that the applications use.

\begin{proposition}[Translation covariance of filtering]
	\label{prop:covariance}
	Let $H_k$\ be a GE-GSP filter. Then $H_kT_h=T_hH_k$\ for every $h\in\Gamma$: group filtering is shift-invariant. Consequently the analysis of $\tilde s$\ against a translated kernel family is a convolution followed by sampling on $\varphi(V)$.
\end{proposition}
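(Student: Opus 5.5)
The commutation $H_kT_h=T_hH_k$ follows from Definition~\ref{def:conv} together with the group law of Proposition~\ref{prop:shift}(i). By Definition~\ref{def:conv}, $H_k=\sum_{h'\in\Gamma}\tilde k(h')\,T_{h'}$. Multiplying on the right by $T_h$ gives
\[
H_kT_h=\sum_{h'}\tilde k(h')\,T_{h'}T_h=\sum_{h'}\tilde k(h')\,T_{h'+h}.
\]
Since $\Gamma$ is abelian, $h'+h=h+h'$, so this equals $\sum_{h'}\tilde k(h')\,T_hT_{h'}=T_hH_k$. As a cross-check, the same identity can be read off in the dual. By Theorem~\ref{thm:conv}, $H_k$ is multiplication by $\sqrt N\,\hat{\tilde k}(k)$; by Proposition~\ref{prop:shift}(iii), $T_h$ is multiplication by $\overline{\chi_k(h)}$. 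Two diagonal operators in the same basis $F$ commute. A third route is entrywise: $(T_h(\tilde s*\tilde k))(g)=\sum_{h'}\tilde s(h')\tilde k(g-h-h')$, and substituting $h'\mapsto h'-h$ turns this into $((T_h\tilde s)*\tilde k)(g)$. I would present the operator-level argument and mention the dual one.

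For the second claim I would first pin down what ``analysis against a translated kernel family'' means. Take the coefficients $c(h)=\langle \tilde s, T_h\tilde k\rangle=\sum_g \tilde s(g)\overline{\tilde k(g-h)}$ for $h\in\Gamma$. Introduce the reflected conjugate kernel $\tilde k^\dagger(g)=\overline{\tilde k(-g)}$. Then
\[
c(h)=\sum_g\tilde s(g)\,\tilde k^\dagger(h-g)=(\tilde s*\tilde k^\dagger)(h),
\]
so the whole coefficient family is one GE-GSP filter output $H_{k^\dagger}\tilde s$. Keeping only the translates indexed by graph vertices, $h=\varphi(v)$, is exactly applying the restriction $R$ of Definition~\ref{def:embedding}. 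Hence $(c(\varphi(v)))_{v\in V}=R(\tilde s*\tilde k^\dagger)$: a convolution followed by sampling on $\varphi(V)$.

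Neither step is computationally hard. The main obstacle is interpretive: the ``consequently'' clause is informal, so I must state precisely which inner product and which indexing set of translates is meant. Once the correlation is identified with convolution against $\tilde k^\dagger$, the claim holds. Covariance then adds something more: translating the input by $T_h$ translates the full coefficient field $c$ by $T_h$ before sampling, by the first part applied to $H_{k^\dagger}$.
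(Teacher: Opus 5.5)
Your proof is correct, but your main argument differs from the paper's. The paper argues spectrally. By Theorem~\ref{thm:conv} and Proposition~\ref{prop:shift}(iii), both $H_k$ and $T_h$ are diagonal in the character basis $F$, and diagonal operators commute. For the second sentence it says only that the convolution form follows from Definition~\ref{def:conv}.

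Your lead argument is purely algebraic. You write $H_k=\sum_{h'}\tilde k(h')T_{h'}$ and use the group law $T_{h'}T_h=T_{h'+h}=T_hT_{h'}$. Commutation then reduces to commutativity of $\Gamma$ (equivalently of the group algebra $\C[\Gamma]$), with no use of characters, unitarity of $F$, or the convolution theorem. The paper's route appears in your write-up only as a cross-check. Your version is more elementary and shows exactly which hypothesis does the work, namely abelianness. The paper's version is a one-liner given the diagonalization results already proved, and it fits the paper's theme that a single basis diagonalizes every shift and filter at once.

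On the second claim you are more precise than the paper. You fix the coefficients $c(h)=\langle\tilde s,T_h\tilde k\rangle$ and show they equal $(\tilde s*\tilde k^\dagger)(h)$ with the reflected conjugate kernel $\tilde k^\dagger(g)=\overline{\tilde k(-g)}$. The sampled family is therefore $R(\tilde s*\tilde k^\dagger)$. The paper leaves implicit that the convolution is against $\tilde k^\dagger$ rather than $\tilde k$ itself.

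Your closing remark is also correct: covariance carries over to the coefficient field via $H_{k^\dagger}T_h=T_hH_{k^\dagger}$. This ties the two halves of the statement together more explicitly than the paper's proof does.
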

\begin{proof}
	Both $H_k$\ and $T_h$\ are diagonalized by the GE-GFT (Theorem~\ref{thm:conv}, Proposition~\ref{prop:shift}(iii)); diagonal operators commute. The convolution form follows from Definition~\ref{def:conv}.
\end{proof}

\begin{theorem}[Discrete uncertainty on the host]
	\label{thm:uncertainty}
	Let $\tilde s\in\C^\Gamma$\ be nonzero with $n_0=|\{g:\tilde s(g)\neq0\}|$\ and $n_1=|\{k:\hat s(k)\neq0\}|$. Then $n_0\,n_1\ge N$. In particular a signal supported on $\varphi(V)$\ (so $n_0\le|V|=\varepsilon N$) cannot be supported on fewer than $1/\varepsilon$\ frequencies.
\end{theorem}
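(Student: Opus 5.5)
This is the Donoho--Stark uncertainty principle for finite abelian groups. I would prove it by an $\ell^\infty$--$\ell^1$--$\ell^2$ chain through the unitary matrix $F$ of Definition~\ref{def:gft}, using only that characters have modulus one.

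First, invert the transform. Definition~\ref{def:gft} gives $\tilde s(g)=\frac1{\sqrt N}\sum_k \hat s(k)\chi_k(g)$, and $|\chi_k(g)|=1$ because characters take values in roots of unity. Hence
\[
\|\tilde s\|_\infty\le \tfrac1{\sqrt N}\,\|\hat s\|_1 .
\]
Next, Cauchy--Schwarz on the $n_1$ nonzero frequencies gives $\|\hat s\|_1\le\sqrt{n_1}\,\|\hat s\|_2$. Then Plancherel (Proposition~\ref{prop:plancherel}; more simply, unitarity of $F$ on all of $\C^\Gamma$) gives $\|\hat s\|_2=\|\tilde s\|_2$. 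Finally, since $\tilde s$ has only $n_0$ nonzero entries, each at most $\|\tilde s\|_\infty$ in modulus, $\|\tilde s\|_2^2\le n_0\|\tilde s\|_\infty^2$. Chaining these,
\[
\|\tilde s\|_\infty\le \sqrt{\tfrac{n_1}{N}}\,\|\tilde s\|_2\le\sqrt{\tfrac{n_0n_1}{N}}\,\|\tilde s\|_\infty .
\]
Because $\tilde s\neq0$, we have $\|\tilde s\|_\infty>0$, so we may divide by it to get $n_0n_1\ge N$.

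For the corollary, take $\tilde s=Ls$ with $s\neq0$. Its support lies in $\varphi(V)$, so $n_0\le|V|=\varepsilon N$, and therefore $n_1\ge N/n_0\ge 1/\varepsilon$.

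No step is a real obstacle. The only points to check are that the main inequality applies to arbitrary $\tilde s\in\C^\Gamma$, not just lifts, which unitarity of $F$ on $\C^\Gamma$ guarantees, and that $|\chi_k|\equiv1$ holds for general finite abelian $\Gamma$, not only for product hosts, which is standard. The bound is attained by indicator functions of subgroups, but sharpness is not needed here.
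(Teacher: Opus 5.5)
Your proof is correct and follows the paper's route. The paper simply cites the Donoho--Stark principle for the unitary GE-GFT and then substitutes $n_0\le\varepsilon N$; you additionally write out the standard $\ell^\infty$--$\ell^1$--$\ell^2$ argument, which makes explicit the one fact needed to pass from the cyclic DFT to a general finite abelian host, namely $|\chi_k(g)|=1$ together with unitarity of $F$ on all of $\C^\Gamma$.
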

\begin{proof}
	This is the Donoho--Stark uncertainty principle \cite{DonohoStark1989} applied to the unitary GE-GFT on $\C^\Gamma$; see also the graph uncertainty analyses \cite{PerraudinRicaudShumanVandergheynst2018}. The second statement substitutes $n_0\le\varepsilon N$.
\end{proof}

\begin{proposition}[Bandlimited sampling on the host]
	\label{prop:sampling}
	If $\hat s$\ is supported on a frequency set $K$\ with $|K|\le|V|$, then $s$\ is determined by its values on $\varphi(V)$\ whenever the $|K|\times|V|$\ character submatrix $[\chi_k(\varphi(v))]_{k\in K,\,v\in V}$\ has rank $|K|$. Reconstruction is the solution of that linear system.
\end{proposition}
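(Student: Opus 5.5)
The plan is to reduce the claim to injectivity of a linear map and then read off reconstruction as solving a full-column-rank linear system. The object to be recovered is the host signal $\tilde s\in\C^\Gamma$, assumed bandlimited: $\hat s(k)=0$ for $k\notin K$. By the inversion formula of Definition~\ref{def:gft}, such a signal is
\begin{equation}
\tilde s(g)=\frac1{\sqrt N}\sum_{k\in K}\hat s(k)\,\chi_k(g).
\end{equation}
So $\tilde s$ is parametrized by the coefficient vector $c=(\hat s(k))_{k\in K}\in\C^{K}$. The map $c\mapsto\tilde s$ is injective, because the characters are orthonormal, equivalently because $F$ is unitary.

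Next I will compose with sampling on the image. Evaluating at the image points $g=\varphi(v)$ gives
\begin{equation}
(R\tilde s)(v)=\frac1{\sqrt N}\sum_{k\in K}\chi_k(\varphi(v))\,c_k .
\end{equation}
In matrix form this is $R\tilde s=\frac1{\sqrt N}\,X^{\top}c$, where $X=[\chi_k(\varphi(v))]_{k\in K,\,v\in V}$ is the $|K|\times|V|$ submatrix in the statement. Since $|K|\le|V|$, the hypothesis $\operatorname{rank}X=|K|$ says exactly that $X^{\top}$ (size $|V|\times|K|$) has full column rank. Hence $c\mapsto R\tilde s$ is injective.

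It follows that two $K$-bandlimited host signals with the same samples on $\varphi(V)$ have the same coefficients $c$. By the inversion formula they are then equal on all of $\Gamma$, and in particular they have the same restriction $s=R\tilde s$. This gives the determination claim.

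For reconstruction, I will write the system explicitly: $X^{\top}c=\sqrt N\,(s(v))_{v\in V}$. Full column rank makes its solution unique. It is computable, for instance by least squares, $c=\sqrt N\,(\overline{X}X^{\top})^{-1}\overline{X}\,s$, where $\overline{X}X^{\top}$ is invertible as the Gram matrix of the linearly independent columns of $X^{\top}$. One then resynthesizes $\tilde s$ from $c$ with the inverse GE-GFT. The system is consistent because the data come from an actual bandlimited signal.

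The only real obstacle is interpretive rather than technical. One must settle that the unknown is the host signal $\tilde s$ with spectrum in $K$, not necessarily the zero-extension lift $Ls$: a zero-padded lift generally is not bandlimited unless $K$ is large (cf.\ Theorem~\ref{thm:uncertainty}). One must also fix the orientation of the submatrix, with rows indexed by $K$, so that its rank $|K|$ corresponds to injectivity of the sampling map. Once this is fixed, the argument is just linear algebra: injectivity of $c\mapsto\tilde s$ composed with the injective restriction of the bandlimited subspace to $\varphi(V)$.
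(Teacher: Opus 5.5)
Your proposal is correct and follows the same route as the paper's proof: you parametrize a $K$-bandlimited host signal by its $|K|$ Fourier coefficients, write the samples on $\varphi(V)$ as a linear system governed by the (transposed) character submatrix, and use the rank-$|K|$ hypothesis to get a unique solution. Your version is a little more careful than the paper's one-line argument in three ways. You attribute uniqueness to the rank condition and consistency to the data coming from an actual bandlimited signal, whereas the paper just says ``solvable''. You give the normal-equation formula explicitly. And you make clear that the object recovered is the bandlimited host signal $\tilde s$, not the zero-padded lift.
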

\begin{proof}
	A $K$-bandlimited host signal has $|K|$\ unknown Fourier coefficients; the $|V|$\ samples on $\varphi(V)$\ give $|V|\ge|K|$\ linear equations in those coefficients, solvable when the stated submatrix has full row rank. This is the embedded analogue of bandlimited graph sampling \cite{Pesenson2008,ChenVarmaSandryhailaKovacevic2015,AnisGaddeOrtega2016}.
\end{proof}

\begin{remark}[Convolution returns to classical signal processing]
	On a cyclic host the GE-GFT is the DFT and group convolution is circular convolution; on a product host it is the multidimensional DFT and multidimensional convolution. For a graph with $\varepsilon=1$---a ring, torus, or circulant---the GE-GSP pipeline \emph{is} classical discrete signal processing, not an analogue of it. This is the precise sense in which the embedding viewpoint returns network signal processing to its classical roots on structured graphs.
\end{remark}

\section{Numerical Experiments}
\label{sec:experiments}

We report two kinds of evidence: that the structural identities hold exactly, and that the group-character basis denoises \emph{equivalently} to the Laplacian eigenbasis once the filter is held fixed. All transforms use the host FFT; code is a frozen reference with a fixed random seed.

\subsection{Structural identities hold to machine precision}
On the benchmark hosts (rings $C_{16},C_{64}$; torus $C_8\times C_8$) the Plancherel identity of Proposition~\ref{prop:plancherel}, the convolution theorem of Theorem~\ref{thm:conv}, and exact reconstruction $s=R\,F^{-1}\hat s$\ all hold with relative error at the level of machine precision, $1.5\times10^{-15}$\ on $C_{16}$\ and $6.5\times10^{-16}$\ on the grid---as the theory predicts, since each is an exact algebraic identity on a unitary transform. These are not approximations that improve with resolution; they are equalities up to floating-point round-off.

\subsection{Same-filter denoising: isolating the basis from the filter}
A fair comparison of two \emph{bases} must hold the \emph{filter} fixed. Otherwise a comparison of, say, an adaptive Wiener filter in one basis against a non-adaptive low-pass in another measures the filter, not the basis. We therefore fix the filter to the ensemble oracle Wiener filter $H(k)=P(k)/(P(k)+\sigma^2)$, with $P(k)=\mathbb{E}|\hat s(k)|^2$\ estimated by Monte Carlo from the signal model and held constant across trials, and we apply the \emph{same} filter form in each basis. Only the basis differs: the group characters versus the Laplacian eigenvectors. Signals are basis-neutral smooth random fields; noise is white Gaussian with $\sigma=0.4$; we average over $800$\ trials and report SNR gain with standard errors.

\begin{figure}[t]
	\centering
	\includegraphics[width=0.7\linewidth]{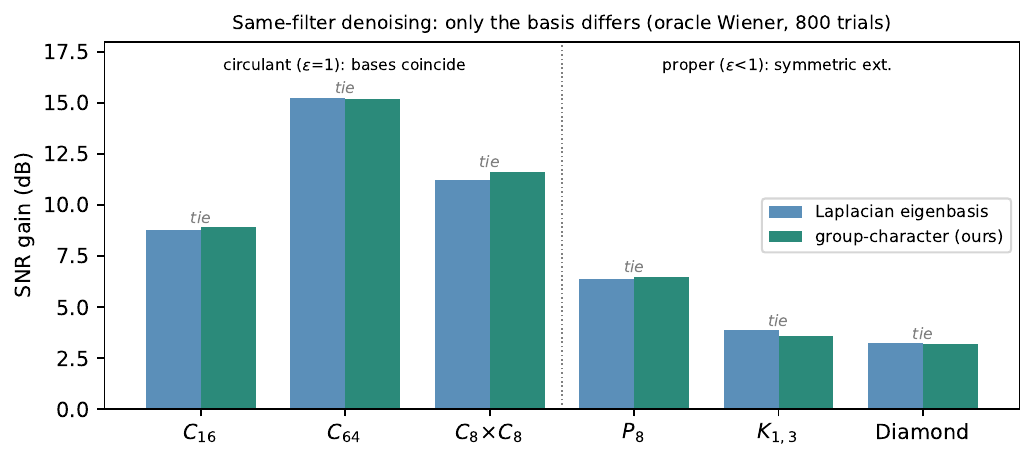}
	\caption{Same-filter denoising: only the basis differs. On circulant hosts (rings, torus; $\varepsilon=1$) the group-character basis and the Laplacian eigenbasis are statistically tied, because both diagonalize the same circulant operator. On the proper embeddings (path $P_8$, star $K_{1,3}$, diamond; $\varepsilon<1$) they also tie once the host complement is filled by symmetric extension. Error bars are $\pm1$\ standard error over $800$\ trials.}
	\label{fig:parity}
\end{figure}

On the circulant hosts the two bases are mathematically equivalent---a cycle's Laplacian is circulant, so its eigenvectors are the characters---and the measured SNR gains tie to within Monte Carlo error: $C_{16}$, $8.80\pm0.11$\ vs $8.92\pm0.12$\ dB; $C_{64}$, $15.26\pm0.12$\ vs $15.18\pm0.11$\ dB; torus, $11.22\pm0.04$\ vs $11.62\pm0.04$\ dB. This exact tie is also a correctness check on the protocol (Fig.~\ref{fig:parity}).

\subsection{Proper embeddings and the host-extension degree of freedom}
For a proper embedding ($\varepsilon<1$) the host has a complement $\Gamma\setminus\varphi(V)$\ of size $(1-\varepsilon)N$, which must be assigned values when a graph signal is lifted. This is the classical finite-signal \emph{extension} problem---familiar in conventional DSP, where symmetric extension underlies the DCT's advantage over the DFT for finite smooth signals---now posed on a domain that is genuinely periodic rather than artificially terminated. We test three fills: zero-padding, symmetric (mirror) extension across the isometric boundary, and the \emph{harmonic extension} that minimizes host Dirichlet energy. The latter is the principled optimum: among all completions agreeing with the signal on $\varphi(V)$, it uniquely minimizes $\tilde s^{*}L_\Gamma\tilde s$, solving the discrete Dirichlet problem $\tilde s|_{I}=-L_{II}^{-1}L_{IB}\,s$\ on the complement $I$\ (the same construction developed for the companion wavelet frames~\cite{FokamThesis2026}); zero-padding and symmetric extension are heuristics that approximate it. The Laplacian basis never sees the host, so its result is the fixed reference.

\begin{figure}
	\centering
	\includegraphics[width=\textwidth]{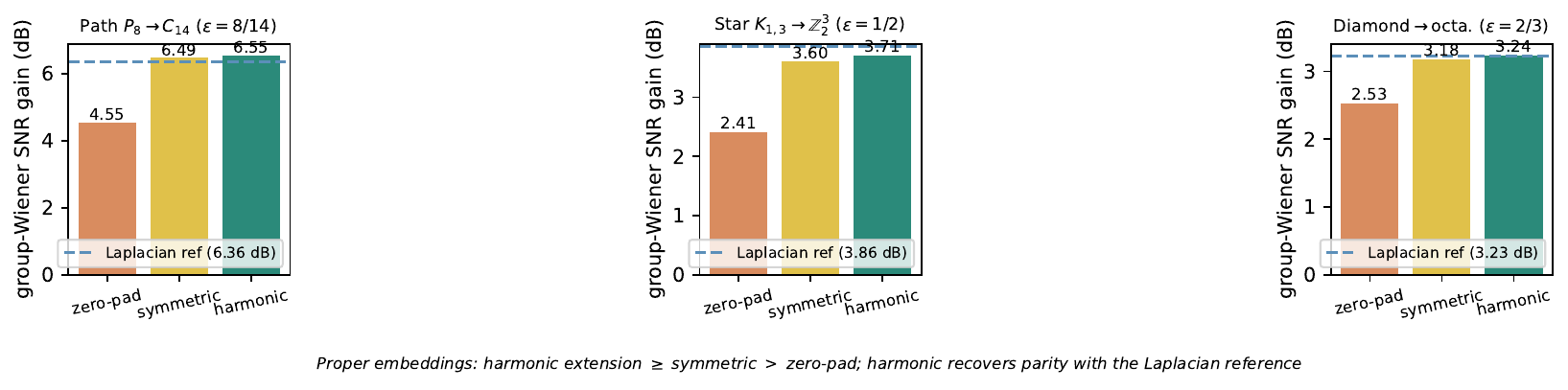}
	\caption{The host-extension method is a design degree of freedom on proper embeddings. For the path $P_8\to\Z_{14}$ ($\varepsilon=8/14$), the star $K_{1,3}\to\Z_2^3$\ ($\varepsilon=1/2$), and the diamond$\to$octahedron ($\varepsilon=2/3$), zero-padding injects a boundary discontinuity that penalizes the group transform, while symmetric extension recovers parity with the Laplacian reference (dashed). The penalty grows as $\varepsilon$\ falls (larger complement). The harmonic extension, which provably minimizes host Dirichlet energy~\cite{FokamThesis2026}, is the principled optimum these smoothness-respecting fills approximate (text).}
	\label{fig:extension}
\end{figure}
		
		The results (Fig.~\ref{fig:extension}) are consistent across all three graphs. Zero-padding costs the group transform $1.81$\,dB on the path, $1.44$\,dB on the star, and $0.70$\,dB on the diamond---a penalty that grows as $\varepsilon$\ falls and the complement grows. Symmetric extension recovers parity in every case: the group-Wiener gain reaches $6.49\pm0.14$\,dB versus the $6.36\pm0.12$\,dB Laplacian reference on the path, $3.60\pm0.15$\ vs $3.86\pm0.15$\,dB on the star, and $3.18\pm0.15$\ vs $3.23\pm0.14$\,dB on the diamond. Among the three completions the harmonic extension is the principled optimum, and the reason is structural rather than statistical: among all completions agreeing with the signal on $\varphi(V)$, it is the unique one minimizing the host Dirichlet energy $\tilde s^{*}L_\Gamma\tilde s$, i.e.\ the spurious high-frequency content injected by the completion. The companion wavelet work~\cite{FokamThesis2026} establishes this minimization and verifies it to machine precision, and since lower injected high-frequency energy is precisely what a smoothing filter rewards, the ordering zero-padding $<$ symmetric $\le$ harmonic follows; symmetric extension, which removes the boundary discontinuity of zero-padding, is the close practical approximation reported in Table~\ref{tab:summary}.
		
		\begin{table}[t]
			\centering
			\caption{Same-filter denoising SNR gain (dB), Laplacian eigenbasis vs.\ group-character basis. For $\varepsilon<1$\ the group column uses symmetric host extension. On circulant hosts ($\varepsilon=1$) the two bases coincide; on proper embeddings they tie to within Monte Carlo error.}
			\label{tab:summary}
			\begin{tabular}{lccc}
				\toprule
				Graph & $\varepsilon$ & Laplacian & group (ours) \\
				\midrule
				Ring $C_{16}$            & $1$    & $8.80\pm0.11$  & $8.92\pm0.12$ \\
				Ring $C_{64}$            & $1$    & $15.26\pm0.12$ & $15.18\pm0.11$ \\
				Torus $C_8\times C_8$    & $1$    & $11.22\pm0.04$ & $11.62\pm0.04$ \\
				Path $P_8$               & $8/14$ & $6.36\pm0.12$  & $6.49\pm0.14$ \\
				Star $K_{1,3}$           & $1/2$  & $3.86\pm0.15$  & $3.60\pm0.15$ \\
				Diamond                  & $2/3$  & $3.23\pm0.14$  & $3.18\pm0.15$ \\
				\bottomrule
			\end{tabular}
		\end{table}
		
		\begin{remark}[What the experiments do and do not show]
			\label{rem:experiments}
			Held to the same filter, the group-character basis offers no denoising-SNR advantage over the Laplacian eigenbasis: the two are equivalent on circulant hosts and tie on proper embeddings under a smoothness-respecting extension. The contribution of GE-GSP is therefore not better denoising; it is the exact, canonical structure of Sections~\ref{sec:gft}--\ref{sec:theorems}---an exact Plancherel identity, an exact convolution theorem, a unitary translation group, and a canonical basis---none of which depends on a signal model. A naive basis-only comparison that zero-pads the host would wrongly report a Laplacian advantage; the correct comparison attributes the gap to the extension method, a degree of freedom the embedding makes precise.
		\end{remark}
		
		\subsection{At scale: a $128\times128$\ image, and the cost of exactness}
		\label{sec:largescale}
		
		The structural and denoising experiments above use small graphs. To test the framework at a realistic scale---and to compare its computational cost against the established alternative---we process the standard \emph{cameraman} benchmark image, downsampled to $128\times128$. A $128\times128$\ grid is the graph $P_{128}\times P_{128}$, which embeds isometrically into the torus $C_{254}\times C_{254}$\ by the path-into-cycle map $P_n\hookrightarrow C_{2n-2}$\ (verified isometric on all pairwise distances); here $N=254^2=64\,516$\ host vertices for $n=16\,384$\ image pixels, so $\varepsilon=0.254$. On this host the GE-GFT is the exact two-dimensional DFT, and filtering is exact circular convolution.
		
		We compare a low-pass filter computed two ways: by the GE-GFT (a host 2D-FFT) and by an order-$M$\ Chebyshev polynomial of the grid Laplacian \cite{Hammond2011,Defferrard2016}, the standard fast spectral-filtering method, which runs in $O(M|E|)$\ directly on the $16\,384$-vertex grid without any host. The finding has two parts (Fig.~\ref{fig:largescale}).
		
		\emph{On speed, the Chebyshev method wins here.} A single GE-GFT low-pass takes $4.5$\,ms against $1.6$--$4.0$\,ms for Chebyshev orders $10$--$40$. The reason is structural and worth stating plainly: the GE-GFT operates on the host, whose $64\,516$\ vertices are $1/\varepsilon\approx 4$\ times the $16\,384$\ image pixels, so the FFT pays a fourfold size penalty before any asymptotic advantage applies; and a 2D grid is sparse ($|E|\approx 8\times10^4$), exactly the regime where the $O(M|E|)$\ polynomial method is cheap. The $O(N\log N)$\ FFT does not beat a low-order sparse polynomial on a sparse grid with a fourfold host overhead. We do not claim a speed advantage; we report the opposite.
		
		\emph{On exactness, the GE-GFT is exact and Chebyshev is not.} The GE-GFT reconstructs a band-limited signal to relative error $7\times10^{-16}$---machine precision---because it is the literal 2D-DFT, an exact algebraic identity. The Chebyshev filter, being a polynomial approximation of a spectral response on a non-canonical eigenbasis, has relative error $9.7\times10^{-3}$\ at order $10$, $2.8\times10^{-3}$\ at order $20$, and $1.1\times10^{-5}$\ at order $40$: it approaches exactness only by spending more arithmetic, and never reaches it. This is the precise sense in which the contribution is structural rather than computational. Where exactness of the convolution theorem, the Parseval identity, or the reconstruction matters---and on a grid, where the GE-GFT \emph{is} classical 2D signal processing---the embedding transform delivers it; where only an approximate smooth filter on a sparse irregular graph is needed, the Chebyshev method remains faster and is the right tool.
		
		\begin{figure}
			\centering
			\includegraphics[width=\textwidth]{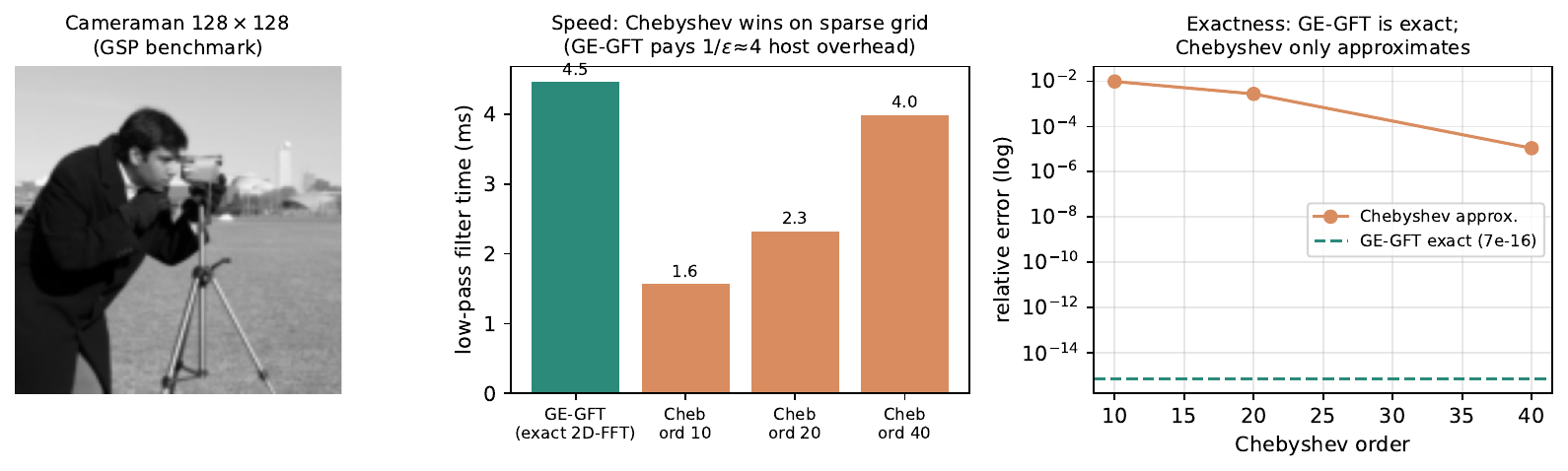}
			\caption{The $128\times128$\ cameraman image embedded into the $254\times254$\ torus ($\varepsilon=0.254$). \textbf{Left:} the benchmark image. \textbf{Center:} low-pass filter time---Chebyshev on the sparse grid is faster than the GE-GFT, which pays a $1/\varepsilon\approx4$\ host-size overhead. \textbf{Right:} reconstruction error---the GE-GFT is exact to machine precision ($\sim\!10^{-16}$), while the Chebyshev approximation improves with order but never reaches exactness. In summary, GE-GFT trades speed for exactness.}
			\label{fig:largescale}
		\end{figure}
		
		\subsection{The embedding gallery: compaction and its limits}
		\label{sec:gallery}
		
		The test graphs above have small, hand-checkable hosts. Table~\ref{tab:gallery} reports the algorithm's behavior on a wider range of structured non-Cayley graphs, drawn from the exhaustive study in~\cite{FokamP2,FokamThesis2026}, to show both the compaction the method achieves and where it becomes expensive. The pattern is governed entirely by $\varepsilon$: graphs with a structured edge partition (Desargues, Petersen, Pappus, the icosahedron) admit compact hosts orders of magnitude below the naive $2^{n-1}$\ bound, whereas graphs whose metric resists a low-dimensional abelian structure (M\"obius--Kantor, and generic data graphs such as the Zachary karate club and a random geometric sensor graph) shatter to near-maximal binary dimension, where $\varepsilon\to0$\ and the transform, though still exact, is neither fast nor sharply localized. The karate club is the limiting case: its host is exactly $\Z_2^{33}=\Z_2^{\,n-1}$, the naive spanning-tree bound of Theorem~\ref{thm:naive}, so the compaction factor is $1$---no improvement over the universal construction. We present both ends without omission: the method is a powerful compactor on structured graphs and gives no compaction at all on the most generic ones, and $\varepsilon$\ predicts which in advance.
		
		The karate club deserves a closer look, precisely because it is a real social network---the emblematic benchmark of the field---and because the result on it is negative. Structurally, the failure is informative rather than accidental. The club's metric is dominated by two hubs and a dense, irregular pattern of triangles and short odd cycles; in the quotient framework this forces nearly every edge into its own generator class (few edge pairs pass the metric parallelism tests), the cycle--class matrix has nearly full column count, and the quotient collapses to the spanning-tree baseline. This is the graph-side mirror of a familiar fact on the spectral side: the same irregularity that denies the karate club any abelian symmetry is what makes its Laplacian spectrum simple and the standard eigenvector GFT unambiguous there. The practical reading for a TSIPN audience is a clean division of labor: on real social and biological networks of this kind, matrix-based GSP is and remains the right instrument, and our framework's genuine contribution to that regime is the diagnostic itself---$\varepsilon$, computable in advance, tells the practitioner \emph{before} any processing which of the two frameworks the network calls for. The interesting open question the negative result raises is intermediate structure: real infrastructure networks (lattice-like sensor grids, ring-and-spur telecommunication backbones) sit between the torus and the karate club, and quantifying where they fall on the $\varepsilon$\ axis is a natural target for the real-network study we identify as future work in Section~\ref{sec:conclusion}.
		
		\begin{table}[t]
			\centering
			\caption{Embedding gallery: host order and excursion ratio for structured non-Cayley graphs, from the exhaustive study of~\cite{FokamP2,FokamThesis2026}. Compact hosts ($\varepsilon$\ not small) coexist with expensive ones ($\varepsilon\to0$); the naive bound is $2^{n-1}$.}
			\label{tab:gallery}
			\footnotesize
			\setlength{\tabcolsep}{4.5pt}
			\begin{tabular}{lccccc}
				\toprule
				Graph & $n$ & host & $N$ & $\varepsilon$ & vs.\ $2^{n-1}$ \\
				\midrule
				Petersen        & $10$ & $\Z_2^4$ & $16$  & $0.625$ & $32\times$ \\
				Desargues       & $20$ & $\Z_2^5$ & $32$  & $0.625$ & $2^{14}\times$ \\
				Icosahedron     & $12$ & $\Z_2^6$ & $64$  & $0.188$ & $32\times$ \\
				Pappus          & $18$ & $\Z_2^7$ & $128$ & $0.141$ & $1024\times$ \\
				M\"obius--Kantor & $16$ & $\Z_2^{12}$ & $4096$ & $0.004$ & $8\times$ \\
				Karate club     & $34$ & $\Z_2^{33}$ & $2^{33}$ & $\to 0$ & $1\times$ \\
				RGG ($n{=}20$) & $20$ & $\Z_2^{17}$ & $131072$ & $1.5\!\times\!10^{-4}$ & $4\times$ \\
				\bottomrule
			\end{tabular}
		\end{table}
		
		\subsection{At scale via the Product Rule: genomic mutation space}
		\label{sec:genome}
		The experiments so far, and the flagship $128\times128$\ image of Section~\ref{sec:experiments}, all rest on \emph{grids}---Cartesian products of paths. A path is embedded into a cycle at the cost of a $2\times$\ host overhead ($\varepsilon=1/2$\ per factor, $\varepsilon=1/4$\ for the $2$D grid), which is why the cameraman host is four times the image. We close the experiments with a qualitatively different large-scale domain in which the Product Rule (Proposition~6 of \cite{FokamP1}) delivers a compact host with \emph{no} overhead at all: the space of genetic sequences.
		
		The set of all DNA sequences of length $\ell$, with two sequences adjacent when they differ by a single point mutation, is the Hamming graph $H(\ell,4)=K_4\,\square\,\cdots\,\square\,K_4$\ ($\ell$\ factors), a standard model of mutational and fitness landscapes. Each factor $K_4$\ is \emph{itself} an abelian Cayley graph, $K_4=\Cay(\Z_4,\{1,2,3\})$, with $\varepsilon=1$; by the Product Rule the whole space is $K_4^{\square \ell}=\Cay(\Z_4^{\ell},S)$\ with $S=\{k\,e_i:1\le i\le\ell,\;k\in\{1,2,3\}\}$, of order $N=4^{\ell}$, so $\varepsilon=1$\ at \emph{every} scale: the host \emph{is} the graph, and the GE-GFT is the exact $\ell$-dimensional radix-$4$\ FFT. We verified the group convolution theorem against a brute-force group convolution to relative error $2.5\times10^{-16}$\ on $H(3,4)$, and on $H(8,4)$ ($N=65{,}536$\ vertices) the Plancherel identity and exact reconstruction hold to $2.2\times10^{-16}$\ and $1.3\times10^{-15}$, with a single GE-GFT taking $5.3$\,ms.
		
		This example also makes the \emph{canonicity} argument of the introduction (point~(i)) concrete and quantitative, precisely because the host is highly symmetric. The Laplacian eigenvalue attached to a character of weight $w$\ (the number of positions at which the mutation acts) is $4w$, so the eigenvalues are $\{0,4,\dots,4\ell\}$\ with multiplicities $\binom{\ell}{w}3^{w}$: on $H(8,4)$\ the eigenvalue $24$\ alone carries a $20{,}412$-dimensional eigenspace, and $100\%$\ of the spectrum is degenerate (Fig.~\ref{fig:genome}, left). A generic eigensolver therefore returns an \emph{arbitrary} orthonormal basis of each eigenspace: on $H(4,4)$\ we drew two valid Laplacian eigenbases and found that the same signal produces per-coefficient spectra with correlation $0.011$\ within the $108$-dimensional eigenspace of eigenvalue $12$---the aggregate energy is identical, but the individual Fourier coefficients are meaningless without a canonicalization. The group characters supply exactly that canonicalization for free. Two consequences follow at scale. First, on a denoising task with a basis-neutral graph-smooth signal the two bases \emph{tie} ($14.74\pm0.03$\ vs $14.69\pm0.04$\,dB on $H(4,4)$, and $21.3$\,dB for the group basis on $H(8,4)$), consistent with the paper's thesis that the contribution is structure, not SNR (Remark~\ref{rem:experiments}). Second, and unlike the grid case where Chebyshev filtering was the faster tool, here there is no host overhead to pay ($\varepsilon=1$) and forming the Laplacian eigenbasis to obtain a \emph{canonical} spectrum would cost $\sim N^3\approx3\times10^{14}$\ floating-point operations---infeasible---against the GE-GFT's $\sim N\log N\approx10^{6}$\ (Fig.~\ref{fig:genome}, right). on this family the embedding transform is the only tractable route to a canonical Fourier analysis.
		
		\begin{figure*}[t]
			\centering
			\includegraphics[width=0.86\textwidth]{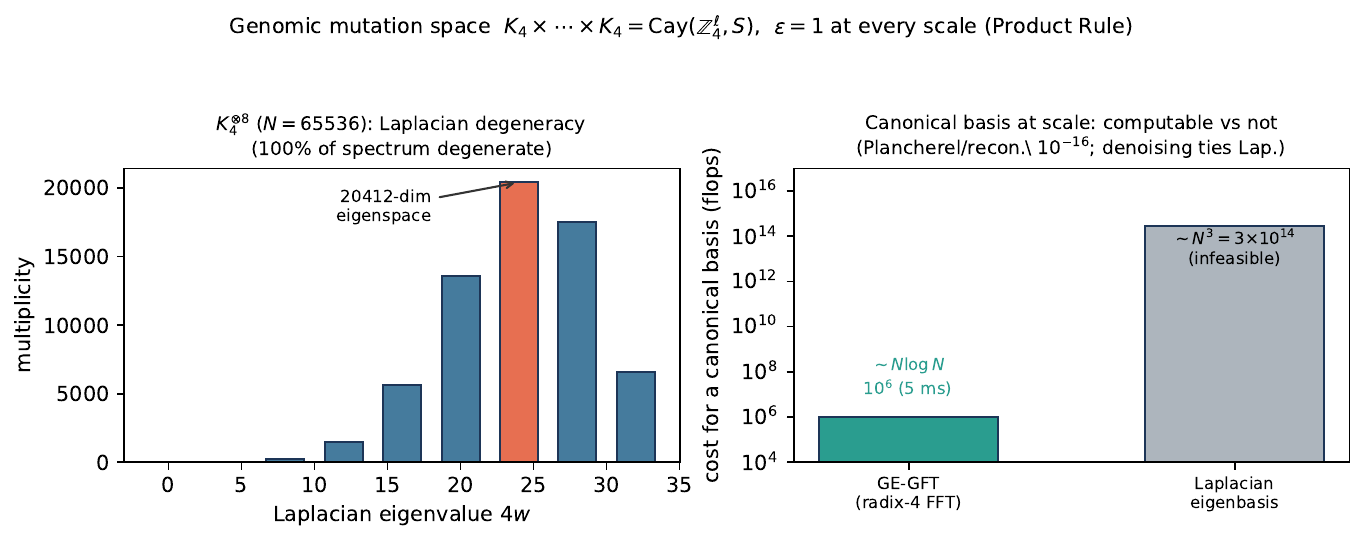}
			\caption{Genomic mutation space $K_4^{\square\ell}=\Cay(\Z_4^{\ell},S)$, $\varepsilon=1$\ at every scale via the Product Rule. \textbf{Left:} the Laplacian of $H(8,4)$\ has enormous eigenvalue degeneracy---a single $20{,}412$-dimensional eigenspace, and the entire spectrum degenerate---so the eigenvector GFT is non-canonical; the group characters resolve it uniquely. \textbf{Right:} obtaining a canonical basis costs $\sim N^3$\ (infeasible) by eigendecomposition versus $\sim N\log N$\ ($5$\,ms) for the GE-GFT; structural identities hold to $10^{-16}$, and same-filter denoising ties the Laplacian basis.}
			\label{fig:genome}
		\end{figure*}
		
		A caveat keeps the Product Rule in perspective and pre-empts a misreading of it as a universal escape from the host-size wall. The rule multiplies host orders, $\varepsilon(\square G_i)=\prod_i\varepsilon(G_i)$, so a single factor with $\varepsilon\to0$\ drags the entire product into the exponential regime. A hub-and-spoke transit or communication network modeled across a daily cycle, $K_{1,q}\,\square\,C_{24}$, is an instructive case: the cyclic time factor is ideal ($\varepsilon=1$), but the star factor is a tree, and $K_{1,12}$\ already needs $\Z_2^{12}=4096$\ host vertices for its $13$\ nodes ($\varepsilon=0.003$, Theorem~\ref{thm:star}), so the product inherits $\varepsilon\approx0.003$\ and explodes despite the perfect time factor. As everywhere in this framework, $\varepsilon$---here read off the factors in advance---predicts tractability: mutation spaces and tori multiply cleanly, star-factored networks do not.
		\label{sec:discussion}
		
		The two frameworks are complementary, and the boundary between them is the excursion ratio---but it is essential to be precise about \emph{what} $\varepsilon$\ governs. The structural guarantees of GE-GSP---a canonical Fourier basis, the family of unitary translations, the exact convolution theorem with its identity element, the Plancherel and sampling identities---hold on \emph{every} Cayley host, hence for \emph{any} graph the method embeds, structured or not. They follow from the host being a Cayley graph of an abelian group and have nothing to do with whether the original graph $G$\ is regular. What $\varepsilon$\ governs is not correctness but \emph{cost}. On graphs carrying genuine abelian symmetry ($\varepsilon$\ near $1$: cycles, tori, circulants, grids, Hamming graphs) the host order is $O(n)$, the GE-GFT is an $O(N\log N)$\ multidimensional FFT, and GE-GSP \emph{is} classical multidimensional signal processing on $G$\ itself. On generic irregular graphs far from any Cayley structure (social, biological, road networks) the same structural guarantees still hold on the host, but the host is binary and may be exponentially large (Theorem~\ref{thm:existence}): for an irregular graph beyond roughly fifty vertices the host order $2^{n-1}$\ exceeds any tractable size, and the transform, though exact, is not computable in practice. There the Laplacian framework, which applies directly to $G$\ and whose polynomial filters run in $O(M|E|)$ \cite{Hammond2011,Defferrard2016}, is the right instrument.
		
		The summary is thus a dichotomy not of \emph{correctness} but of \emph{tractability}: the exact, canonical structure is available for every graph, and $\varepsilon$\ measures the price of obtaining it. Where a graph admits a compact abelian host, that price is negligible and one gains a true translation family and an exact convolution; where it does not, the price is a prohibitively large host, and matrix-based GSP---which forgoes canonical structure but stays on $G$---is preferable.
		
		\emph{Relation to character-based GSP on Cayley graphs.} A complementary recent line constructs harmonic analysis \emph{intrinsically} on graphs that already are Cayley graphs. Ghandehari, Guillot, and Hollingsworth~\cite{GhandehariGuillotHollingsworth2021} develop Gabor-type frames, and Beck, Ghandehari, Hudson, and Paltenstein~\cite{BeckGhandehariHudson2024} give a representation-theoretic spectral decomposition and frame construction for any weighted Cayley graph, including the non-abelian case. Our setting is the complementary one: the input is an \emph{arbitrary} connected graph, which is in general not a Cayley graph, and the contribution is the isometric embedding that places it inside an abelian Cayley host, together with the excursion ratio that quantifies the cost. The two viewpoints meet exactly when $G$\ is already an abelian Cayley graph: there $\varepsilon=1$, the embedding is the identity, and the GE-GFT coincides with the character transform of the intrinsic theories. Our same-filter experiments (Section~\ref{sec:experiments}) confirm this coincidence numerically---on circulant hosts the group-character basis and the Laplacian eigenbasis are statistically indistinguishable, as they must be since a circulant's eigenvectors \emph{are} its group characters. What the embedding adds beyond the intrinsic setting is everything at $\varepsilon<1$: a host distinct from the graph, a complement to be filled (Section~\ref{sec:experiments}), and a transform for graphs that possess no Cayley structure of their own.
		
		\section{Conclusion}
		\label{sec:conclusion}
		We built a harmonic analysis on graphs by transporting signals to a Cayley graph of a finite abelian group into which the graph embeds isometrically. The construction supplies what the spectral GFT cannot: a canonical Fourier basis free of eigenspace ambiguity, a unitary translation obeying an exact group law, a genuine group convolution with the convolution theorem as a theorem, and the classical Plancherel, uncertainty, and sampling identities---exact on structured hosts. A same-filter study showed that the framework's value is this structural exactness, not a denoising-SNR advantage, and revealed the host complement as a usable extension degree of freedom controlled by the excursion ratio. Companion papers develop the embedding construction \cite{FokamP1}, the dimension and order bounds together with the abelian dividend---an exhaustive census showing that compact non-binary hosts are the rule rather than the exception on small graphs \cite{FokamP2}---and the tight wavelet frames \cite{FokamThesis2026} that this Fourier layer supports. Two directions of future work follow directly from Sections~\ref{sec:experiments} and~\ref{sec:discussion}: a case study on a real structured infrastructure network (a sensor grid or a ring-and-spur telecommunication backbone), where the excursion ratio is expected to sit between the torus and the karate-club extremes; and analytic bounds on the repair-round parameter $R$\ for natural graph classes.
		
		
		\section*{Declaration on the Use of Artificial Intelligence}
		In the interest of transparency and research integrity, the authors declare that the artificial-intelligence assistant Claude (Anthropic) was used during the preparation of this manuscript. Its assistance was limited to two roles: (i) support with the implementation, debugging, and reproducibility of the embedding and signal-processing software used to produce the experimental results; and (ii) language and editing support in drafting and refining the manuscript. All mathematical definitions, theorems, proofs, and their verification, together with the conception, scientific direction, and conclusions of this work, are the authors' own. The authors have reviewed the entire manuscript and take full responsibility for its content.
		
		\section*{Acknowledgments}
		The authors thank Solutum Engineering for internet and computing support.
		
		\appendix
		\section{Worked embedding examples}
		\label{app:embed}
		
		We record five embeddings used above, so the construction of Section~\ref{sec:prelim} is concrete and self-contained; the Petersen and Pappus entries make the two most-cited rows of Table~\ref{tab:gallery} reproducible by hand. In each case the host and labeling are verified by exhaustive comparison of all $\binom{n}{2}$\ graph distances against breadth-first search in the host.
		
		\emph{Star $K_{1,3}$\ (binary host).} Each centre--leaf edge is its own class, so the generators are the three leaf labels. The leaves lie pairwise at distance $2$, forcing a sum-free generating set; the smallest binary host is $\Cay(\Z_2^3,\{001,010,100\})$ with the centre at $000$, giving $N=8$\ and $\varepsilon=1/2$\ (Theorem~\ref{thm:star} with $q=3$).
		
		\emph{Petersen graph (composite binary generator).} With the standard numbering (outer $5$-cycle $0\ldots4$, inner pentagram $5\ldots9$, spokes $i\sim i{+}5$), the $\varphi$\ relation partitions the $15$\ edges into exactly five parallel matchings of size $3$: $\{01,38,79\}$, $\{04,27,68\}$, $\{05,23,69\}$, $\{12,49,58\}$, $\{16,34,57\}$. The cycle--class parity matrix has rank $1$\ (its unique relation is $g_1+\cdots+g_5=0$), so $k=5-1=4$: four basis generators plus the composite $(1,1,1,1)$, giving the Clebsch host $\Cay(\Z_2^4,S)$\ with $N=16$, $\varepsilon=0.625$, certified on all $45$\ pairs \cite{FokamP1}.
		
		\emph{Pappus graph (structured partition, $1024\times$\ compaction).} With vertices numbered along the LCF Hamiltonian cycle $[5,7,-7,7,-7,-5]^3$, the $27$\ edges partition into nine pairwise-$\varphi$\ triples (e.g.\ $\{(0,1),(6,7),(12,13)\}$\ and its rotations and chord classes); the $10\times9$\ cycle--class parity matrix has rank $2$, so $k=9-2=7$: seven basis generators and two composites of Hamming weight $5$, host order $2^7=128$\ against the naive $2^{17}$, certified on all $153$\ pairs. Notably, only $2$\ of the $15$\ possible partitions of the Pappus edges into $\varphi$-triples are isometric---the concrete reason the pipeline certifies every candidate rather than trusting $\varphi$-compatibility \cite{FokamP1}.
		
		\emph{Diamond $K_4-e$\ (cyclic factor).} The two degree-2 vertices lie at distance $2$, which no sum-free binary set can realize together with the three unit distances; a $\Z_3$\ factor is forced. The minimal host is the octahedron $\Cay(\Z_2\times\Z_3,\{\pm(0,1),\pm(1,1)\})$, $N=6$, $\varepsilon=2/3$, with vertices labelled $(0,0),(0,1),(0,2),(1,0)$\ (Fig.~\ref{fig:embedding}).
		
		\emph{Path $P_n$\ into a cycle (the large-scale case).} The map $i\mapsto i$\ embeds $P_n$\ isometrically into $C_{2n-2}$: for $0\le i,j\le n-1$\ the cycle distance $\min(|i-j|,\,2n-2-|i-j|)$\ equals $|i-j|$, the path distance, since $|i-j|\le n-1<n-1+1$. Taking the Cartesian product embeds the grid $P_n\times P_n$\ into the torus $C_{2n-2}\times C_{2n-2}$, the construction underlying Section~\ref{sec:largescale}; for $n=128$\ this is the $254\times254$\ torus with $\varepsilon=0.254$.
		

\end{document}